\newtheorem{theorem}{\textbf{Theorem}}
\newtheorem{corollary}{\textbf{Corollary}}
\begin{document}

\title{Your Actions Tell Where You Are: Uncovering Twitter Users in a Metropolitan Area}


\numberofauthors{3}
\author{
\alignauthor Jinxue Zhang \\
\affaddr{Arizona State University} \\
\email{jxzhang@asu.edu}
\alignauthor Jingchao Sun \\
\affaddr{Arizona State University} \\
\email{jcsun@asu.edu}
\and
\alignauthor Rui Zhang  \\
\affaddr{University of Hawaii} \\
\email{ruizhang@hawaii.edu}
\alignauthor Yanchao Zhang \\
\affaddr{Arizona State University} \\
\email{yczhang@asu.edu}
}

\maketitle

\begin{abstract}

Twitter is an extremely popular social networking platform. Most Twitter users do not disclose their locations due to privacy concerns. Although inferring the location of an individual Twitter user has been extensively studied, it is still missing to effectively find the majority of the users in a specific geographical area without scanning the whole Twittersphere, and obtaining these users will result in both positive and negative significance. In this paper, we propose LocInfer, a novel and lightweight system to tackle this problem. LocInfer explores the fact that user communications in Twitter exhibit strong geographic locality, which we validate through large-scale datasets. Based on the experiments from four representative metropolitan areas in U.S., LocInfer can discover on average 86.6\% of the users with 73.2\% accuracy in each area by only checking a small set of candidate users. We also present a countermeasure to the users highly sensitive to location privacy and show its efficacy by simulations.

\end{abstract}

\section{Introduction}\label{sec:Intro}

Twitter is an extremely popular social networking tool for communicating through short messages called tweets. As of July 2014, Twitter has 255 million monthly active users and 500 million daily tweets.  Due to such massive user bases and popular usage, Twitter has been increasingly used in social communications, information campaigns, public relations, political campaigns, pandemic and crisis situations, marketing, and many other public/private contexts.

User privacy is arguably a major concern about Twitter. Specifically, user profiles and tweets may contain sensitive information about life, work, health, hobbies, political opinions, etc. Twitter currently offers little protection for user profiles and tweets which are virtually visible to anyone with or without an account.\footnote{Although Twitter allows a user to make his information visible to approved followers only, this privacy enhancement is rarely used in practice.} Consequently, many users employ pseudonyms instead of real names in their profiles. In addition, Twitter users often hide their home locations (location for short thereafter), which are \emph{permanent and static city-level} regions (e.g., Philadelphia) where most of their daily activities occur. Specifically, they may either not indicate their locations or report very general locations (e.g., state-level) in their profiles; they may not indicate their locations in their tweets either. For example, less than 34\% of Twitter users explicitly specify their locations in their profiles  \cite{LeetaMap13}, only 16\% of Twitter users indicate city-level locations, and only 0.5\% of tweets have a geo-tag \cite{LiTow12}.

There have been some efforts to infer a Twitter user's hidden location. Content-based methods \cite{HechtTwe11, ChengYou10, MahmuWhe12, LiTow12, MahmuHom14} try to infer  hidden locations based on geographic hints such as city landmarks in tweets. For example, a user who frequently mentions ``Golden Bridge'' in his tweets may indicate his location in the Bay Area. In contrast, network-based methods \cite{BacksFin10, McGeeLoc13, JurgeTha13, YamagLan13, ComptGeo14} leverage the fact that geographically-close people tend to form a connection or community in Online Social Networks (OSNs) \cite{QuercSoc12}, so a user's location can be inferred from those of his online neighbors (or neighbors' neighbors, etc). Based on different estimation techniques, all these efforts \cite{HechtTwe11, ChengYou10, MahmuWhe12, LiTow12, MahmuHom14, BacksFin10, McGeeLoc13, JurgeTha13, YamagLan13, ComptGeo14} seek to address the same question: how can we infer a Twitter user's hidden location from all his location-related tweets and/or OSN neighbors' locations?

This paper targets a different and more challenging problem: is it feasible to efficiently discover the majority of Twitter users in any city-level metropolitan area ($\mathbb{A}$) without collaborating with Twitter? Since only 16\% of Twitter users register city-level locations\cite{LiTow12}, it is infeasible to tackle our problem by directly checking users' tweets and profiles. In addition, directly applying any prior solution \cite{HechtTwe11, ChengYou10, MahmuWhe12, LiTow12, MahmuHom14, BacksFin10, McGeeLoc13, JurgeTha13, YamagLan13, ComptGeo14} would inevitably involve checking every (255 million) Twitter user's tweets, followers, and/or followees, thus leading to a prohibitive cost.


An affirmative answer to our target problem above would have significant \emph{positive} and \emph{negative} impacts. On the \emph{positive} side, finding the majority of the users in a specific area can not only benefit many applications such as local event detection and recommendation, business marketing, and em- \newline ergency-alert dissemination, but also offer a feasible way to sample Twitter to facilitate the research concerning geographically related information. On the \emph{negative} side, if an attacker can infer the majority of the Twitter users in a specific area, he could easily combine the location information with user tweets to better profile Twitter users who may or may not use pseudonyms, thus breaching their privacy and subjecting them to many identity-based attacks. Moreover, the Twitter users with exposed locations are vulnerable to large-scale location-based or geo-targeted spam campaigns \cite{SridhTwi12}.


In this paper, we propose LocInfer, a novel and lightweight solution to the above problem for the first time in literature. The design of LocInfer is driven by two conjectures. First, a small but nontrivial fraction of users (15.9\% on average in our datasets) have specified a credible location in the target area $\mathbb{A}$ in their personal profiles, each of which is referred to as a \emph{seed user} hereafter. Second, user communications in Twitter exhibit strong geographic locality in the sense that the users in the same area tend to interact more often than with those from outside. We confirm these two conjectures through large-scale datasets involving four representative metropolitan areas in U.S. Built upon these conjectures, LocInfer iteratively checks the immediate neighbors of the seed set, and the users who have tight connections with the seed set become new seeds and are added to the seed set. The final seed set contains the majority of Twitter users in $\mathbb{A}$ with overwhelming probability. LocInfer is highly efficient because only a small number of candidate users need to be checked in contrast to almost all the Twitter users when the existing methods \cite{HechtTwe11, ChengYou10, MahmuWhe12, LiTow12, MahmuHom14, BacksFin10, McGeeLoc13, JurgeTha13, YamagLan13, ComptGeo14} are applied to our problem.

Our contributions can be summarized as follows.
\begin{itemize}
\item We motivate and formulate the problem of large-scale location inference, which is challenging given that only a small fraction of Twitter users have specified a credible city-level location in their personal profiles.

\item We design LocInfer, a novel and lightweight solution that can uncover the majority of the Twitter users in a specific metropolitan area.

\item We conduct extensive experiments to evaluate LocInfer using four large-scale datasets. Our results show that LocInfer can successfully discover on average 86.6\% of the users with 73.2\% accuracy.

\item We propose a countermeasure against LocInfer for the Twitter users worrying about their location privacy and evaluate its effectiveness via simluations.
\end{itemize}

The rest of this paper is organized as follows. Section ~\ref{sec:DLA} defines the problem. Section~\ref{sec:Validation} validates our two conjectures through four large-scale datasets. Section~\ref{sec:LocInfer} details the LocInfer design. Section~\ref{sec:Evaluate} evaluates LocInfer and our countermeasure. Section~\ref{sec:RW} surveys the related work. Section~\ref{sec:CFW} concludes this paper and presents some future work.

\section{Problem Statement, Terms and Notation}\label{sec:DLA}

We use a directed and weighted multigraph\footnote{In a multigraph, two vertices may be connected by more than one edge.} to model the diverse communications between Twitter users. In Twitter, people can follow others without mutual consent; they can mention others in their own tweets; they can also reply to or retweet others' tweets. We classify these communications into two categories: \emph{following} and \emph{interacting} (retweeting, replying, and mentioning), denoted by symbols $\mathcal{F}$ and $\mathcal{I}$, respectively. Such diverse communications are modeled as a directed and weighted multigraph $G=\langle V, E\rangle$, where each vertex $v\in V$ represents a user. We refer to a directed edge for the following type as a following edge and a directed edge for the interacting type as an interacting edge. A following edge $e^\mathcal{F}_{ij} \in E$ is formed when user $i$ followed $j$; we call user $i$ a \emph{follower} of $j$ and $j$ a \emph{followee} of $i$. In contrast, an interacting edge $e^\mathcal{I}_{ij} \in E$ is formed when user $i$ mentioned, replied to, or retweeted $j$ at least once; we call user $i$ a \emph{responder} of $j$ and $j$ an \emph{initiator} of $i$. To model the interaction strength, we define $w(e^\mathcal{I}_{ij})$, the weight of edge $e^\mathcal{I}_{ij}$, as the total number of retweets, replies, and mentions from user $i$ to $j$. For consistency, we also define the weight of any following edge as one.  We use $N^{\mathcal{F}}_I(u), N^{\mathcal{F}}_O(u), N^{\mathcal{I}}_I(u), N^{\mathcal{I}}_O(u)$ to represent $u$'s one-hop followers, followees, responders, and initiators, respectively. We also define the one-hop neighbors of $u$ as $N(u)=N^{\mathcal{F}}_I(u)\cup N^{\mathcal{F}}_O(u)\cup N^{\mathcal{I}}_I(u)\cup N^{\mathcal{I}}_O(u)$.\\

\noindent\textbf{Large-Scale Location Inference.} Given a Twitter multigraph $G=\langle V, E\rangle$ and a target metropolitan area $\mathbb{A}$, we aim to obtain a target user list $U$ which contains the majority of Twitter users in $\mathbb{A}$ without collaborating with Twitter.\\

\noindent\textbf{Design goals.} LocInfer is designed with the following goals.
\begin{itemize}
\item \emph{High coverage}. The target user list $U$ should cover the majority of Twitter users in $\mathbb{A}$. If we denote the actual Twitter users in $\mathbb{A}$ by $U^*$, the coverage can be computed as $|U \cap U^*| / |U^*|$.

\item \emph{High accuracy\footnote{Note that \emph{coverage} and \emph{accuracy} correspond to the widely-used \emph{recall} and \emph{precision}, respectively. In this paper we use the coverage and accuracy to make the meaning more straightforward in the context of user uncovering in an area.}}. The target users in $U$ should be indeed located in $\mathbb{A}$. The accuracy can be computed as $|U \cap U^*| / |U|$.

\item \emph{Efficiency}. LocInfer should only involve checking Twitter users proportional in quantity to the population in $\mathbb{A}$ in contrast to existing methods \cite{HechtTwe11, ChengYou10, MahmuWhe12, LiTow12, MahmuHom14, BacksFin10, McGeeLoc13, JurgeTha13, YamagLan13, ComptGeo14} which all need to check all the Twitter users. This efficiency requirement is particularly important because without Twitter's collaboration, the only free way to obtain the users' information is via third-party APIs, which is time-consum- \newline ing as Twitter has strict rate limits on APIs invoking \cite{api13}. For example, an authenticated user can only invoke the get-followers API 15 times per 15 minutes. Hence if we invoke this API once for each of the 255 million Twitter users, it will spend a single authenticated user about 485 years to obtain all the Twitter users' followers.
\end{itemize}

\begin{table*}[t]
    \caption{Seed users in four metropolitan areas in U.S.}
    \centering
    \begin{tabular}{|c|c|c|c|c|}
       \hline
        \multirow{2}{*}{Area $\mathbb{A}$} & Population & \multirow{2}{*}{\#Twitter users} & \#seed users & \#seed users with \\
                        & (rank in U.S.)  &   &   (over \#Twitter users)  &$\geq$ 1 million followers \\
       \hline
        Tucson (\texttt{TS}) &  996,544 (57th) & 150,478 & 28,161 (18.65\%) & 0\\
        Philadelphia (\texttt{PI}) &  6,034,678 (7th) & 911,236 & 144,033 (15.9\%) & 3\\
        Chicago (\texttt{CI}) &  9,522,434 (3rd) & 1,437,888 & 318,632 (22.21\%) & 11\\
        Los Angeles (\texttt{LA}) &16,400,000 (2nd) &  2,476,400 & 300,148 (12.12\%) & 174\\
        \hline
    \end{tabular}
    \label{tlb:datasets}
\end{table*}

\section{Conjectures Validation}\label{sec:Validation}

As we mentioned in Section~\ref{sec:Intro}, LocInfer is built upon two important conjectures.
\begin{itemize}
\item \emph{Conjecture~1}: A small but nontrivial fraction of users have specified a credible location in the target area $\mathbb{A}$ in their personal profiles.
\item \emph{Conjecture~2}: User communications in Twitter exhibit strong geographic locality in the sense that the users in the same area tend to communicate more often than with those from outside.
\end{itemize}
In this section, we validate these two conjectures using four large-scale datasets.

\subsection{Data Collection}

We collect ground-truth Twitter users in different metropolitan areas by checking the self-reported locations in their profiles, a methodology that has been used to obtain the ground truth in \cite{HechtTwe11, ChengYou10, MahmuWhe12, LiTow12, MahmuHom14, BacksFin10, McGeeLoc13, JurgeTha13, YamagLan13, ComptGeo14}. Specifically, we use the Twitter geo-search API designed to return the recent or popular tweets in a specified \emph{geo-circle} defined by latitude, longitude, and radius \cite{api13}. For any interested area $\mathbb{A}$, we convert it into a geo-circle for the geo-search API, and we do not differentiate $\mathbb{A}$ and its corresponding geo-circle hereafter. The geo-search API returns the tweets from three types of users.
\begin{itemize}
\item \emph{Geo-tagged users}: The users who recently published some tweets with a geo-tag in $\mathbb{A}$.

\item \emph{Geo-profiled users}: The users whose personal profiles containing a location in $\mathbb{A}$.

\item \emph{Retweeting users}: The users who recently retweeted some geo-tagged or geo-profiled users' tweets in $\mathbb{A}$.
\end{itemize}
Among them, we only use the geo-profiled users to build our datasets, because retweeting users are likely not in $\mathbb{A}$, and geo-tagged users may have just traveled to some places within the geo-circle instead of living there. Moreover, since the result of each geo-search API invoking corresponds to a random sampling of the active Twitter users, we keep invoking the geo-search API until no significantly more geo-profiled users can be discovered.

The self-reported locations have been found reliable \cite{ComptGeo14}, but the results from the geo-search API are still noisy for two reasons. First, the location descriptions in many users' profiles are ambiguous and arbitrary. For example, people living in Los Angeles may specify their locations as ``South California'', or ``Los Angeles'', or ``LA'', or just ``CA.'' Second, the geo-search API often needs to covert a location description into a longitude-latitude pair for comparison with the specified geo-circle. Such conversions are often problematic and thus lead to wrong results. For example, when we searched the users in San Francisco Bay Area, the geo-search API returned some users in other places or even nonsense descriptions such as ``somewhere you're not'' and ``wherever you not.''

We thus refine the geo-profiled users as follows. For each user, we further verify whether his/her location description indeed contains a city name in $\mathbb{A}$. For this purpose, we first obtain the list of city names in $\mathbb{A}$ from the latest U.S. gazetteer data \cite{gazetteer13} and then compare the location description with the list. If there is an intersection, the user is considered a ground-truth user in $\mathbb{A}$.

\subsection{Datasets}

Using the above method, we collect user data in four met- \newline ropolitan areas of Tuscon (Arizona), Philadelphia, Chicago, and Los Angeles. Our data collection ran from January to June 2014. Table~\ref{tlb:datasets} summarizes the four datasets. As we can see, the four populations vary from one million in \texttt{TS} to 16 millions in \texttt{LA}, from the not-so-popular areas (e.g., \texttt{TS}) to popular areas (e.g., \texttt{LA}). Note that all the metropolitan population information is from the U.S. Census Bureau website.

\subsection{Conjecture Validation}

To validate the first conjecture above, we estimate the number of Twitter users for each area according to the eMarketer report claiming that 15.1\% of U.S. people are using Twitter as of Feb. 2014 \cite{emarketer14}. As we can see from Table~\ref{tlb:datasets}, the seed users range from 12.12\% in \texttt{LA} to 22.21\% in \texttt{CI} with the average ratio of 15.9\%. This result is consistent with the measurement in \cite{LiTow12} and implies that we have almost crawled all the users who have specified their city-level locations in these areas.

To validate the second conjecture above, we first define three locality metrics. In particular, for the multigraph $G=\langle V, E\rangle$ defined in Section~\ref{sec:DLA}, let $V'$ denote any subset of $V$. We define follower locality $l_\textrm{follower}(V')$, followee locality \newline $l_\textrm{followee}(V')$, and initiator locality $l_\textrm{initiator}(V')$ as
\begin{equation}
\begin{split}
l_\textrm{follower}(V') = \frac{|N^\mathcal{F}_I(V') \cap V'|}{|N^\mathcal{F}_I(V')|},
l_\textrm{followee}(V') = \frac{|N^\mathcal{F}_O(V') \cap V'|}{|N^\mathcal{F}_O(V')|},&\\
\textrm{and } l_\textrm{initiator}(V') = \frac{w(N^\mathcal{I}_O(V') \cap V')}{w\large(N^\mathcal{I}_O(V')\large)},&\label{eq:locality}
\end{split}
\end{equation}
respectively, where $N^\mathcal{F}_I(V'), N^\mathcal{F}_O(V')$, and $N^\mathcal{I}_O(V')$ represent the followers, followees, and initiators of $V'$, respectively, and $w(\cdot)$ represents the total weight of the corresponding interacting edges.


We let $V'$ equal the seed users in each area and then compute the corresponding locality. To do so, we crawl all the followers and followees of each seed user, and we also crawl the latest 600 tweets of each seed user to extract their initiators. For the comparison purpose, we build two types of random user sets. First, we merge the four seed sets into a single set from which we randomly select the same number of users as the seeds in each area. Second, we randomly select from the whole Twitter system the same number of users as the seeds in each area and compute their corresponding locality. We build 10 different user sets for both random user sets.

\begin{table}[t]
    \centering
    \caption{Locality in each area. Each element is composed of three values, representing the locality for the seed users in each area, the first type of random user set, and the second type of random user set, respectively.}
    \begin{tabular}{|c|c|c|c|}
       \hline
        $\mathbb{A}$ & $l_\textrm{follower}(U)$ (\%) & $l_\textrm{followee}(U)$ (\%) & $l_\textrm{initiator}(U)$ (\%) \\
       \hline
        \texttt{TS} &  8.1 | 0.08|0.04 & 9.2 | 0.3 | 0.1 & 12.8 | 0.5 | 0.2 \\
        \texttt{PI} &  4.9 | 0.4 | 0.2 & 8.4 | 1.5 | 0.6 & 14.9 | 2.7 | 1.2 \\
        \texttt{CI} &  6.9 | 1.0 | 0.5 & 10.3| 3.3 | 1.3 & 16.9 | 5.2 | 2.6 \\
        \texttt{LA} &  1.5 | 0.9 | 0.5 & 8.4 | 3.1 | 1.2 & 17.0 | 5.0 | 2.5 \\
        \hline
    \end{tabular}
    \label{tlb:locality}
\end{table}

\begin{table}[t]
    \centering
    \caption{Breaking down the initiator locality by three types of interactions. }
    \begin{tabular}{|c|c|c|c|}
       \hline
        $\mathbb{A}$ & Replying (\%) & Retweeting (\%) & Mentioning (\%)\\
       \hline
        \texttt{TS} &  14.41 & 10.61 & 13.37 \\
        \texttt{PI} &  14.05 & 12.95 & 16.99 \\
        \texttt{CI} &  17.46 & 14.23 & 18.50 \\
        \texttt{LA} &  15.43 & 15.44 & 19.05 \\
        \hline
    \end{tabular}
    \label{tlb:locInter}
\end{table}

Table~\ref{tlb:locality} shows the results of the locality analysis. We can see that the three locality values of the seed users in each area are always much higher than those of the random user sets. This result confirms our conjecture that physical proximity plays a big role in enabling online communications in Twitter. Moreover, Table~\ref{tlb:locality} shows a higher percentage of a user's initiators in the same area than that of his/her followees. It is not surprising because a user may follow many people in different areas but often interact with only a few selected followees. In addition, we can see that the followee locality is much higher than the follower locality except in \texttt{TS}. The reason can be explained as follows. A celebrity user such as @rihanna can easily attract millions of followers from around the world, but she may only follow relatively fewer people. So we can expect a higher percentage of her followees in the same area (Los Angeles) than that of her followers. Since each of the areas except Tuscon has a large number of celebrity users, the followee locality is much higher than the follower locality. In contrast, Tuscon is a much smaller area with relatively few celebrity users, so we can expect similar followee and follower locality. Table~\ref{tlb:locInter} also shows that mentions, replies, and retweets contribute similarly to the initiator locality of each seed set, so we do not distinguish them in the LocInfer design.

\begin{figure}[t]
\centering
    \includegraphics[width=0.4\textwidth]{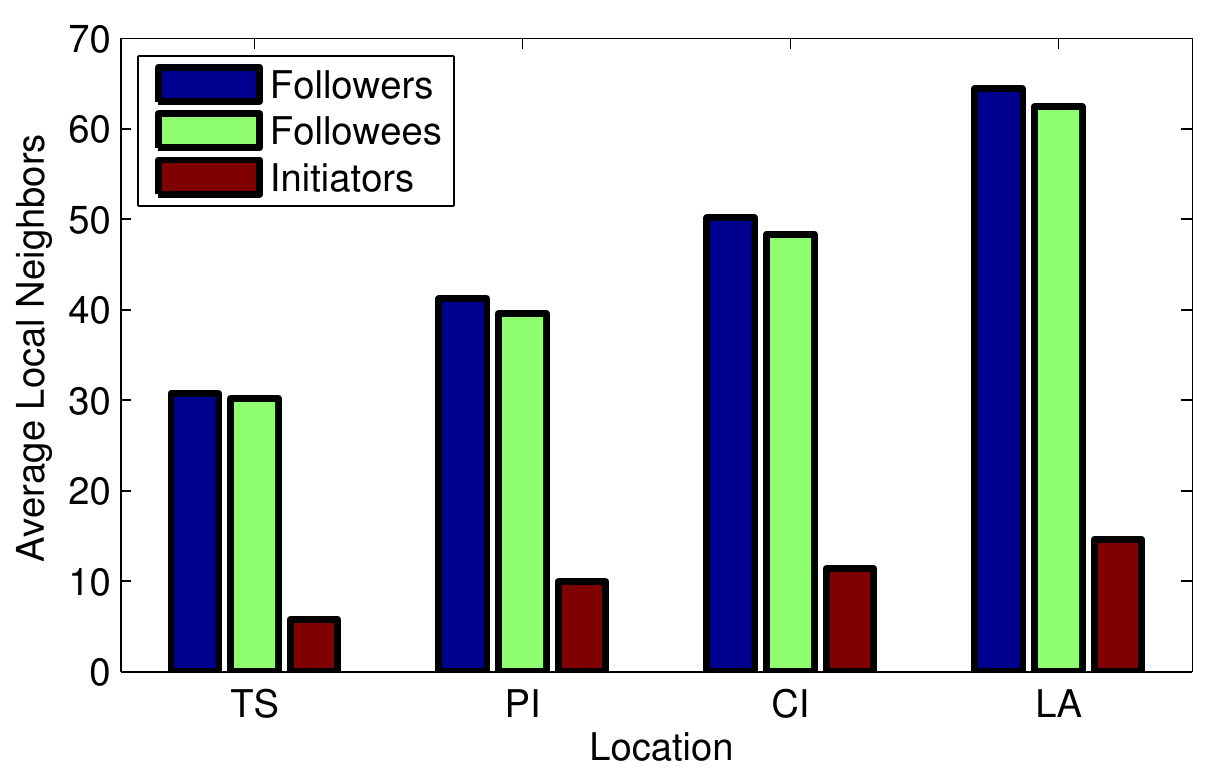}
\caption{The average local neighbors of the seed users.}\label{fig:avgd}
\end{figure}

Finally, although interacting communications (replies, mentions, and retweets) show much stronger locality than following communications, Fig.~\ref{fig:avgd} shows that the corresponding interacting edges (i.e., initiators) are much fewer than the following edges (i.e., followers and followees), meaning that people interact less than they follow others. Moreover, we also observe that people usually interact with the ones who they follow or follow them. In particular, let us define the overlap between $N^\mathcal{I}_O(V')$ and $N^\mathcal{F}_I(V') \cup N^\mathcal{F}_O(V')$ for each area as
\[
\frac{|N^\mathcal{I}_O(V')  \cap (N^\mathcal{F}_I(V') \cup N^\mathcal{F}_O(V'))|}{|N^\mathcal{I}_O(V')|}.
\]
Our analysis shows that the average overlap for the four areas is 96.2\%.

\section{{LocInfer}}\label{sec:LocInfer}
As stated before, our goal is to uncover the majority of Twitter users in an area $\mathbb{A}$. A naive solution is to use existing location inference methods \cite{HechtTwe11, ChengYou10, MahmuWhe12, LiTow12, MahmuHom14, BacksFin10, McGeeLoc13, JurgeTha13, YamagLan13, ComptGeo14} for estimating the location of every Twitter user and then select the ones in $\mathbb{A}$. However, these methods are impratical for our problem. In particular, they would require crawling the followers, the followees, and many tweets  for all the 255 million active Twitter users. Since Twitter has strict rate limits on data crawling \cite{api13}, the crawling process for these methods will be time-consuming. In addition, the network-based methods \cite{BacksFin10, McGeeLoc13, JurgeTha13, YamagLan13, ComptGeo14} need to store and process the edges of the whole Twitter graph, thus leading to prohibitive storage and processing costs.

Now we present LocInfer, an efficient and effective three-step system to identify the majority of users in $\mathbb{A}$. As mentioned earlier, LocInfer is built upon two conjectures which have been experimentally validated in Section~\ref{sec:Validation}. First, we can find a nontrivial number (15.9\% from our datasets) of users who have explicitly indicated a location in $\mathbb{A}$ through their personal profiles. These users are referred to as seed users (or seeds) in $\mathbb{A}$ and denoted by $S$. Second, user communications in Twitter exhibit strong geographic locality in the sense that users in the same area tend to have more intensive communications with each other in Twitter than with those from outside. Based on these two conjectures, LocInfer first builds a seed set $S$ (step 1 in Section ~\ref{sec:step1}) and then checks the one-hop neighbors of the seed set $S$, which constitute a candidate set denoted by $C$ (step 2 in Section ~\ref{sec:step2}). Because of nontrivial seed set $S$ and the strong geographic locality, $C$ will cover the majority of the users in $\mathbb{A}$, but also include many users outside. Hence LocInfer chooses the candidate users who have tight connections with $S$ as new seeds and add them to $S$, and this process continues until some termination conditions are met (step 3 in Section ~\ref{sec:step3}). The final seed set $S$ contains the majority of Twitter users in $\mathbb{A}$ with overwhelming probability. LocInfer is highly efficient because it only checks a much smaller set of Twitter users in contrast to all the Twitter users if existing methods \cite{HechtTwe11, ChengYou10, MahmuWhe12, LiTow12, MahmuHom14, BacksFin10, McGeeLoc13, JurgeTha13, YamagLan13, ComptGeo14} are applied.

We notice that many community structures (e.g., a group of people in different locations with common interests or past experience like classmates and colleagues) rather than the geographic community may also yield strong inter-connect- \newline ions. Hence LocInfer may include some users outside $\mathbb{A}$ in the candidate set $C$. However, the impact of such outside users is minimal because LocInfer only selects the users in the target area $\mathbb{A}$ as the seeds $S$ and only chooses the target users who have strong communications with $S$ later.


\subsection{Step~1: Finding Seed Users}\label{sec:step1}
The first step in LocInfer is to extract the seed users who are most certainly in $\mathbb{A}$. To that end, we use the same method as in Section~\ref{sec:Validation} by invoking the Twitter geo-search API to obtain the geo-profiled users and then refine them by checking their location descriptions to build the seed set $S$ in $\mathbb{A}$.

It is possible that some people may specify the fake home locations in their profiles, and it is infeasible to completely pinpoint and exclude such users. Fortunately, such self-reported locations have been verified to be very reliable \cite{ComptGeo14} and have been used as the ground truth in \cite{HechtTwe11, ChengYou10, MahmuWhe12, LiTow12, MahmuHom14, BacksFin10, McGeeLoc13, JurgeTha13, YamagLan13, ComptGeo14}. Meanwhile, we may accidentally exclude some users indeed in $\mathbb{A}$, which is quite acceptable given our focus on obtaining a reliable seed set in this step. We admit that more advanced methods can be used for the seed searching and refinement, which are left for the future work.

\subsection{Step 2: Finding Candidate Users} \label{sec:step2}

Based on the nontrivial number of seed users, the second step then is to construct a candidate-user set $C$ from the one-hop neighbors of $S$ that potentially covers the majority of Twitter users in $\mathbb{A}$ but is also much smaller than the set of all Twitter users. Below we first discuss how we decide the candidate users in $C$ and then theoretically analyze the coverage of $C$.

\subsubsection{Choosing $C$}

We first build the candidate set $C$ from the one-hop neighbors of $S$. The underlying intuition is based on the two conjectures validated in Section~\ref{sec:Validation}. Specifically, The second conjecture indicates that the users in the same geographic area tend to communicate more densely among themselves than to those from outside. On the one hand, if a user has very limited communications to all the seed users in $S$ which occupies about 15.9\% of the total users in $\mathbb{A}$, with high probability he/she is not in $\mathbb{A}$; on the other hand, a user that is indeed in $\mathbb{A}$ is very likely to have direct communication with some seeds. We therefore choose to build the candidate set $C$ from the one-hop neighbors of $S$, denoted as $N(S)$.

Two details need further consideration. As defined in Section~\ref{sec:DLA}, each Twitter user has four kinds of neighbors in $G=\langle V, E\rangle$: followers, followees, initiators, and responders. Which neighbors should we choose for each seed user? We observe from Fig.~\ref{fig:avgd} that many Twitter users may follow a large number of other users, but they tend to subsequently interact with relatively few followees. Since people usually interact with the ones who they follow or follow them (with averagely 96.2\% of overlap as stated in Section~\ref{sec:Validation}) and $C$ should cover as many users as possible in $\mathbb{A}$, we consider all the followers and followees of each seed user in this step. Moreover, since each user in Twitter can follow arbitrary users without prior consent, the unidirectional following relationship is not a reliable indicator of geographic closeness. To deal with this issue, we propose to only select the candidate users to be the followers and followees of each seed user in $S$ with each having at least $t$ followees and $t$ followers in $S$, where $t$ is a system threshold.


More formally speaking, for each user $u \in N^{\mathcal{F}}_O(S)\cup N^{\mathcal{F}}_I(S)$, we compute $n^{\mathcal{F}}_i(u) = |N^{\mathcal{F}}_I(u) \cap S|$ and $n^{\mathcal{F}}_o(u) = |N^{\mathcal{F}}_O(u) \cap S|$. If both $n^{\mathcal{F}}_i(u)$ and $n^{\mathcal{F}}_o(u)$ are no less than $t$, user $u$ is added to the candidate set $C$ and ignored otherwise.

Alg.~\ref{alg:step1} implements the overall process. Specifically, we first create a followee counter and a follower counter for each user in $N^{\mathcal{F}}_O(S)\cup N^{\mathcal{F}}_I(S)$. Then we traverse the followee and follower list of each seed and increase the corresponding followee and follower counters. If both the followee and follower counters exceed $t$, we choose the user $u$ as a candidate.

\begin{algorithm}
 \SetAlgoLined
 \SetKwInOut{Input}{input}\SetKwInOut{Output}{output}

 \Input{$S,N^{\mathcal{F}}_O(S), N^{\mathcal{F}}_I(S), t$}
 \Output{the candidate set $C$}

 $C \leftarrow \varnothing$;
 $c_o[u] \leftarrow 0, \forall u \in N^{\mathcal{F}}_O(S)$;
 $c_i[v] \leftarrow 0, \forall v \in N^{\mathcal{F}}_I(S)$\;
 \For{$u \in S$}{
    $c_o[v] ++, \forall v \in N^{\mathcal{F}}_O(u)$;
    $c_i[v] ++, \forall v \in N^{\mathcal{F}}_I(u)$\;
}


 \For{$u \in N^{\mathcal{F}}_O(S)$}{
    \If{$c_o[u]\geq t $ and $u \in N^{\mathcal{F}}_I(S)$ and $c_i[u]\geq t $}{
            $C \leftarrow C + \{u\}$\;
    }
 }
 \Return{$C$.}
  \caption{Obtain the candidate set $C$ by only checking the followee and follower lists of the seed set $S$.} \label{alg:step1}
\end{algorithm}

\subsubsection{Coverage of $C$}\label{sec:CoverageC}
The number of candidate users (i.e., $|C|$) is determined by both the number of seed users (i.e., $|S|$) and the system parameter $t$. A natural question is whether $C$ can cover the majority of users in target area $\mathbb{A}$. It is important because the new seeds (or equivalently the target users) will be found only from $C$.

To analyze the coverage of $C$, we first define the following terms and notation. We call users $i$ and $j$ \emph{mutual followers} if they follow each other. Let $G_{\mathbb{A}}=\langle V_{\mathbb{A}}, E_{\mathbb{A}}\rangle$ be a subgraph of the Twitter multigraph $G=\langle V, E\rangle$, where $V_{\mathbb{A}}\subseteq V$ is the set of the Twitter users in the target area $\mathbb{A}$, and $E_{\mathbb{A}}\subseteq E$ is the set of the directed following edges among the users in $V_{\mathbb{A}}$. Consider a seed set $S\subseteq V_{\mathbb{A}}$ with $s=|S|=\alpha |V_{\mathbb{A}}|$ users, where $\alpha\in (0,1]$. Let $N^t(S)$ denote the set of the followers and followees of $S$, each having at least $t$ followers and $t$ followees in $S$, where $t$ is the system threshold stated before. The coverage ratio of $C$ is defined as $r(t)=\frac{|N^t(S)\cup S|}{ |V_{\mathbb{A}}|}$. We then have the following theoretical results about the coverage of $C$ given $|S|$ and $t$.

\begin{theorem}
Assume that each user in $V_{\mathbb{A}}$ has on average $d_m$ mutual followers in $V_{\mathbb{A}}$. When $|V_{\mathbb{A}}|$ is large enough, the expected coverage ratio is $\overline{r}(t) \geq 1 - e^{- \alpha d_m}(1 - \alpha) \newline \sum_{i=0}^{t-1}\binom{s}{i}(\frac{p}{1-p})^i$, where $p=\frac{d_m}{|V_{\mathbb{A}}|-1}$ . \label{th:coverage}
\end{theorem}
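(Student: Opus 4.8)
The plan is to compute, for an arbitrary user $u\in V_{\mathbb{A}}$, the probability that $u$ fails to be covered, i.e. that $u\notin N^t(S)\cup S$, and then use linearity of expectation to get $\overline r(t)$. Condition on the event $u\notin S$, which (under a random choice of the seed set, or equivalently under the degree/configuration model that the ``on average $d_m$ mutual followers'' assumption suggests) has probability $1-\alpha$. Given $u\notin S$, the user $u$ is excluded from $N^t(S)$ exactly when $u$ has at most $t-1$ mutual followers inside $S$ — here I would use the earlier observation that essentially all interactions occur along following edges and the fact that the relevant reliable indicator is the mutual-follower relation, so that ``$\ge t$ followers in $S$ and $\ge t$ followees in $S$'' is driven by the count of mutual followers in $S$. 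The number of $u$'s mutual followers that land in $S$ should be modeled as a sum of indicator variables: each of $u$'s mutual-follower slots independently hits $S$ with probability roughly $\frac{s}{|V_{\mathbb{A}}|-1}$, or dually, each of the $s$ seeds is a mutual follower of $u$ with probability $p=\frac{d_m}{|V_{\mathbb{A}}|-1}$. So the count is Binomial-like with parameters $s$ and $p$, and $\Pr[\text{count}\le t-1]=\sum_{i=0}^{t-1}\binom{s}{i}p^i(1-p)^{s-i}$.

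The second ingredient is the factor $e^{-\alpha d_m}$. The idea is that $u$ is also covered (trivially) if $u$ has \emph{any} neighbor in $S$ at all with enough multiplicity, but more importantly the bound is stated as an inequality, so it suffices to \emph{upper bound} the failure probability. I would pull the $(1-p)^{s}$ term out of the sum and write the remaining $\binom{s}{i}p^i(1-p)^{-i}=\binom{s}{i}\left(\frac{p}{1-p}\right)^i$, matching the theorem's summand. The leftover $(1-p)^s$ factor is then estimated: $(1-p)^s = \left(1-\frac{d_m}{|V_{\mathbb{A}}|-1}\right)^{\alpha|V_{\mathbb{A}}|}\to e^{-\alpha d_m}$ as $|V_{\mathbb{A}}|\to\infty$. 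Combining, $\Pr[u\notin N^t(S)\cup S]\le (1-\alpha)\,e^{-\alpha d_m}\sum_{i=0}^{t-1}\binom{s}{i}\left(\frac{p}{1-p}\right)^i$ in the limit, and then $\overline r(t)=1-\mathbb{E}\!\left[\frac{|\{u: u\text{ uncovered}\}|}{|V_{\mathbb{A}}|}\right]=1-\Pr[u\text{ uncovered}]$ gives the claimed lower bound.

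The main obstacle is making the independence and ``Binomial'' claims rigorous: the mutual-follower relation among users in $V_{\mathbb{A}}$ is a fixed graph, not a random one, so I would need to invoke a random model — either that $S$ is a uniformly random $\alpha$-fraction of $V_{\mathbb{A}}$ (so the count of $u$'s $d_m$ mutual followers inside $S$ is hypergeometric, which for large $|V_{\mathbb{A}}|$ is well-approximated by Binomial$(d_m, \alpha)$, or equivalently by Binomial$(s,p)$ since $d_m p\cdot$ cancels), or an Erdős–Rényi/configuration-model assumption on $E_{\mathbb{A}}$. I expect the paper resolves this by treating each user as having $d_m$ i.i.d. mutual-follower endpoints each landing in $S$ with probability $s/(|V_{\mathbb{A}}|-1)$ and passing to the large-$|V_{\mathbb{A}}|$ limit; the Poisson-type factor $e^{-\alpha d_m}$ is exactly the signature of that limit. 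A secondary subtlety is that $N^t(S)$ requires $\ge t$ followers \emph{and} $\ge t$ followees in $S$ separately, whereas I have been reasoning with mutual followers; since a mutual follower of $u$ in $S$ contributes to both counts simultaneously, having $\ge t$ mutual followers in $S$ is sufficient for membership in $N^t(S)$, which is all that is needed for the lower bound on coverage, so the gap only makes the true coverage larger and the inequality safe.
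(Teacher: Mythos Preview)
Your proposal is correct and matches the paper's proof essentially step for step: the paper also passes to the undirected mutual-follower graph, models each non-seed user's number of seed neighbors as $\mathrm{Binomial}(s,p)$ with $p=\frac{d_m}{|V_{\mathbb{A}}|-1}$, factors out $(1-p)^s\to e^{-\alpha d_m}$, and then invokes exactly your final observation that having at least $t$ mutual followers in $S$ forces membership in $N^t(S)$ (stated there as $N^{\prime t}(S)\subseteq N^t(S)$) to turn the equality for the mutual-follower graph into the desired inequality for $\overline r(t)$. Your discussion of the independence/random-graph assumption is in fact more careful than the paper's, which simply asserts the edge probability $p$ without further justification.
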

\begin{proof}We first construct an undirected graph $G'=\langle V_{\mathbb{A}}, \newline E'\rangle$, where an edge $e'_{ij}\in E'$ is formed if and only if users $i$ and $j$ are mutual followers.
Let $N^{'t}(S)$ be the set of neighbors of $S$ in $G'$, each having at least $t$ neighbors in $S$. We proceed to define the coverage of $S$ in $G'$ as $r'(t) = |N^{'t}(S)\cup S|/|V_{\mathbb{A}}|$.

We now compute $r'(t)$. Since each user has on average $d_m$ edges in $E'$, the probability of one user connecting to any other user is $p=\frac{d_m}{|V_{\mathbb{A}}|-1}$. Moreover, since there are $s=\alpha |V_{\mathbb{A}}|$ seed users, the probability of any non-seed node $u$ connecting to less than $t$ seed users in $G'$ is given by
\begin{equation}\label{eq:rho}
\rho = \sum_{i=0}^{t-1}\binom{s}{i}p^i( 1- p)^{s-i} = ( 1- p)^{s} \sum_{i=0}^{t-1}\binom{s}{i}(\frac{p}{1-p})^i\;.
\end{equation}
When the number of users in $V_{\mathbb{A}}$ is large, we have
\[
\begin{split}
\lim\nolimits_{|V_{\mathbb{A}}| \to +\infty} ( 1- p)^{s}&=\lim\nolimits_{|V_{\mathbb{A}}| \to +\infty} ( 1- d_m/|V_{\mathbb{A}}|)^{\alpha |V_{\mathbb{A}}|} \\
&= e^{-\alpha d_m}\;.
\end{split}
\]
Since there are $|V_{\mathbb{A}}|-s$ non-seed users, the expected number of non-seed users connecting to $t$ or more seeds in $S$ can be computed as $(|V_{\mathbb{A}}|-s)(1-\rho)=|V_{\mathbb{A}}|(1-\alpha)(1-\rho)$. When $|V_{\mathbb{A}}|$ is large, we have
\[
\begin{split}
\overline{r'}(t)&=|N^{'t}(S)\cup S| / |V_{\mathbb{A}}|\\
&= 1-(1-\alpha)\rho\\
&\approx 1-(1-\alpha)e^{-\alpha d_m}\sum_{i=0}^{t-1}\binom{s}{i}(\frac{p}{1-p})^i.
\end{split}
\]

Since each edge in $E'$ corresponds to two directed edges in $E$, all the users in $N^{'t}(S)$ must belong to $N^{t}(S)$. On the other hand, a user in $N^{t}(S)$ may not appear in $N^{'t}(S)$. For example, consider a user who has exactly $t$ followers and $t$ followees in $S$ in graph $G$, where none of his followers and followees are the same. Then this user is an isolated vertex in $G'$, and he is certainly in $N^{t}(S)$ but not in $N^{'t}(S)$. Therefore, we have $N^{'t}(S)\subseteq N^t(S)$ and $\overline{r'}(t)\leq \overline{r}(t)$, and the theorem is proved.
\end{proof}

\begin{corollary}
$\overline{r}(t=1) \geq 1 - e^{- \alpha d_m}(1 - \alpha)$.\label{co:coverage1}
\end{corollary}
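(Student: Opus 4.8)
The plan is to obtain the corollary as an immediate specialization of Theorem~\ref{th:coverage} to the threshold $t = 1$. Theorem~\ref{th:coverage} already establishes, in the regime where $|V_{\mathbb{A}}|$ is large, the bound
\[
\overline{r}(t) \geq 1 - e^{-\alpha d_m}(1-\alpha)\sum_{i=0}^{t-1}\binom{s}{i}\left(\frac{p}{1-p}\right)^i ,
\]
so all that remains is to evaluate the finite sum at $t = 1$.

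First I would note that with $t = 1$ the summation upper limit $t - 1$ equals $0$, so the index $i$ takes only the value $i = 0$ and the sum collapses to its single term $\binom{s}{0}\left(\frac{p}{1-p}\right)^0 = 1$. Substituting this into the displayed inequality gives directly $\overline{r}(1) \geq 1 - e^{-\alpha d_m}(1-\alpha)$, which is exactly the claim.

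There is essentially no obstacle here: the corollary is a pure plug-in, and the asymptotic hypotheses ($|V_{\mathbb{A}}| \to \infty$, average mutual-follower degree $d_m$, $\alpha = s/|V_{\mathbb{A}}|$, $p = d_m/(|V_{\mathbb{A}}|-1)$) are inherited verbatim from the theorem. The only minor points to verify are the standard conventions $\binom{s}{0} = 1$ and $x^{0} = 1$ used when collapsing the sum. If desired, I would append one interpretive sentence: $t = 1$ is the most permissive threshold in Step~2 (a neighbor is admitted to $C$ as soon as it has at least one follower and one followee in $S$), so this corollary records the largest coverage LocInfer can guarantee after Step~2, at the price of the largest candidate set $C$; larger $t$ shrinks $C$ but can only weaken the bound, since each extra term added to the sum is nonnegative.
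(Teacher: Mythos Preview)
Your proposal is correct and matches the paper's approach: the corollary is stated without a separate proof, as it follows immediately from Theorem~\ref{th:coverage} by setting $t=1$ so that the sum reduces to the single term $\binom{s}{0}(p/(1-p))^0 = 1$. Your added interpretive remark about the tradeoff in $t$ is also consistent with the discussion the paper gives right after the corollaries.
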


\begin{corollary}
$\overline{r}(t=2)\geq 1 - e^{- \alpha d_m}(1 - \alpha)(1+\alpha d_m)$. \label{co:coverage2}
\end{corollary}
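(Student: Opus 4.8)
The plan is to obtain Corollary~\ref{co:coverage2} as an immediate specialization of Theorem~\ref{th:coverage} with $t=2$; the only genuine work is to evaluate the finite sum $\sum_{i=0}^{t-1}\binom{s}{i}\left(\frac{p}{1-p}\right)^i$ in closed form at $t=2$ and to track its asymptotics as $|V_{\mathbb{A}}|\to\infty$. Setting $t=2$, the sum has exactly the two terms $i=0$ and $i=1$, so it equals $\binom{s}{0}+\binom{s}{1}\frac{p}{1-p}=1+s\cdot\frac{p}{1-p}$.

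Next I would handle the limit of $s\cdot\frac{p}{1-p}$. Recall from the proof of Theorem~\ref{th:coverage} that $p=\frac{d_m}{|V_{\mathbb{A}}|-1}$ and $s=\alpha|V_{\mathbb{A}}|$, whence $sp=\frac{\alpha|V_{\mathbb{A}}|\,d_m}{|V_{\mathbb{A}}|-1}\to\alpha d_m$ as $|V_{\mathbb{A}}|\to\infty$, while $p\to 0$ so that $\frac{1}{1-p}\to 1$. Therefore $s\cdot\frac{p}{1-p}\to\alpha d_m$ and the bracketed sum tends to $1+\alpha d_m$. Substituting this into the bound $\overline{r}(t)\geq 1-e^{-\alpha d_m}(1-\alpha)\sum_{i=0}^{t-1}\binom{s}{i}\left(\frac{p}{1-p}\right)^i$ of Theorem~\ref{th:coverage} yields $\overline{r}(2)\geq 1-e^{-\alpha d_m}(1-\alpha)(1+\alpha d_m)$, as claimed. (Corollary~\ref{co:coverage1} is the even simpler case $t=1$, where the sum collapses to its single $i=0$ term, which is $1$.)

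Since everything reduces to elementary limits that were already carried out inside the proof of Theorem~\ref{th:coverage}, I do not expect a real obstacle. The one point requiring a little care is the consistency of the asymptotic approximations: one must check that the same large-$|V_{\mathbb{A}}|$ regime used to replace $(1-p)^{s}$ by $e^{-\alpha d_m}$ is also the one in which $sp/(1-p)$ is replaced by $\alpha d_m$, so the combined estimate is internally coherent rather than mixing incompatible orders. A fully rigorous rendering would state the conclusion as the $|V_{\mathbb{A}}|\to\infty$ limit of $\overline{r}(2)$, or as an inequality with an explicit vanishing error term, rather than as an exact inequality at finite $|V_{\mathbb{A}}|$.
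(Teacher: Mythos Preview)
Your proposal is correct and matches the paper's treatment: the corollary is stated without its own proof because it is exactly the specialization of Theorem~\ref{th:coverage} to $t=2$, where the sum reduces to $1+s\frac{p}{1-p}\to 1+\alpha d_m$ in the large-$|V_{\mathbb{A}}|$ regime already used in the theorem. Your remark about the asymptotic consistency is apt but does not go beyond what the paper itself assumes.
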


Since $|V_{\mathbb{A}}|$ is often large in practice, Theorem~\ref{th:coverage} indicates that the coverage ratio $\overline{r}(t)$ approaches 1 when $\alpha d_m$ is large enough. Moreover, the choice of $t$ involves a tradeoff between the crawling cost and the coverage. Specifically, the larger the $t$, the fewer the candidates in $C$, the smaller the crawling cost, the more likely to miss some users in $\mathbb{A}$ (i.e., the lower coverage), and vice versa. The size of $S$ also affects the choice of $t$. On the one hand, if $S$ constitutes a relatively large portion of the users in $\mathbb{A}$ (say, $\alpha=30\%$), it may be safe to use larger $t$ because many users in $\mathbb{A}$ are more likely to have more followees and followers in $S$. On the other hand, if $S$ constitutes a relatively small portion of the users in $\mathbb{A}$ (say, $\alpha=10\%$), it may be safe to use smaller $t$ to avoid excluding too many users in $\mathbb{A}$. 

Here we illustrate how many seeds are needed to achieve a nearly 100\% coverage. Assume that each user in $\mathbb{A}$ has on average 15 mutual followers (i.e., $d_m = 15$). According to Corollaries~\ref{co:coverage1} and \ref{co:coverage2}, when $t=1$, 20\% of the users as seeds can cover 96.02\% of the target users in $\mathbb{A}$, and when $t=2$, 20\% and 30\% of the users as seeds can cover 84.07\% and 95.72\% of the users in $\mathbb{A}$, respectively. Similarly, if $d_m = 30$, only 10\% and 15\% of the users as seeds can cover 95.52\% and 95.99\% of the users for $t=1$ and $t=2$, respectively. These results indicate that when each user has sufficient mutual followers in $\mathbb{A}$, the followers and followees of a small number of seeds can cover the majority of the target users in $\mathbb{A}$.



\subsection{Step~3: Finding Target Users $U$} \label{sec:step3}

Although the candidate set $C$ covers nearly all the users in $\mathbb{A}$ for proper $t$, it may contain many users not in $\mathbb{A}$ who nevertheless have at least $t$ followees and also $t$ followers in the seed set $S$. For example, social butterflies \cite{YangAna12} or social capitalists \cite{GhoshUnd12} have been reported to automatically follow back whoever follows them, and users may also follow each other due to reciprocity \cite{YangAna12,GhoshUnd12}. We thus design the next step to identify the target user set $U$ in $\mathbb{A}$ from $C$ using both the following and interacting connections among the users.

Our key observation as stated is that each target user is very likely to demonstrate significant locality with the seed user set $S$. In other words, we expect that the target users form a strong local community with the seed users. From the initial seed set $S$, we iteratively check the candidate users in $C$, and the candidate who has the highest locality value with the seeds becomes a new seed and is added to $S$. The process iterates until certain conditions are met.

How should we compute the locality of Twitter users with diverse communications? Inspired by the Eq.~(\ref{eq:locality}), we consider three types of locality for any candidate user $u\in C$:
follower locality $l_\textrm{follower}(u)$, followee locality $l_\textrm{followee}(u)$, and initiator locality $l_\textrm{initiator}(u)$, which are computed as
\begin{equation}
\begin{split}
l_\textrm{follower}(u) = \frac{|N^\mathcal{F}_I(u) \cap S|}{|N^\mathcal{F}_I(u)|},
l_\textrm{followee}(u) = \frac{|N^\mathcal{F}_O(u) \cap S|}{|N^\mathcal{F}_O(u)|}, &\\
\textrm{and } l_\textrm{initiator}(u) = \frac{w\large(N^\mathcal{I}_O(u) \cap S)}{w(N^\mathcal{I}_O(u)\large)}, &\label{eq:interLoc}
\end{split}
\end{equation}
where $N^\mathcal{F}_I(u), N^\mathcal{F}_O(u)$, and $N^\mathcal{I}_O(u)$ are $u$'s followers, followees, and initiators, respectively, and $w(\cdot)$ denotes the total weight of the corresponding interacting edges. 

We also consider two methods to integrate the three types of locality. First, we choose the maximum one among them as $u$'s locality, i.e.,
\begin{equation}\label{eq:sepLoc}
l(u) = \max\{l_\textrm{follower}(u), l_\textrm{followee}(u), l_\textrm{initiator}(u)\}\;.
\end{equation}
Second, their weighted combination is used as the locality of $u$, i.e.,
\begin{equation}\label{eq:comLoc}
l(u) = \epsilon_1 l_\textrm{follower}(u) + \epsilon_2 l_\textrm{followee}(u) + \epsilon_3 l_\textrm{initiator}(u)\;,
\end{equation}
where $0\leq \epsilon_1,\epsilon_2,\epsilon_3\leq 1$ and $\epsilon_1 + \epsilon_2  + \epsilon_3 = 1$. In this paper, we choose each of them to be 1/3 for simplicity and leave other possible assignments as the future work.

Finally, we iteratively find the target users based on one of their five types of locality with regard to the seed set $S$. In each iteration, we compute the locality for each candidate $u\in C$ according to Eq.~(\ref{eq:interLoc}), Eq.~(\ref{eq:sepLoc}), or Eq.~(\ref{eq:comLoc}). The candidate with the highest locality is removed from $C$ and added to $S$ as a new seed, as this user contributes most to the tightness of the community around $S$. In addition, the follower, followee, and/or initiator locality values of the remaining candidates in $C$ need be updated in every iteration. Here we just use the followee locality to illustrate the updating operation. Let $l_\textrm{followee}^{(m)}(u)$ denote the followee locality for candidate $u$ in iteration $m\geq 0$, where $l^{(0)}(u)$ can be computed by using the initial seeds in $S$. Assuming that $u^\ast$ has been chosen as a new seed in iteration $m$, we update the followee locality for candidate $u$ as
\begin{equation}\label{eq:updateRatio}
l^{(m+1)}(u) = \left\{
    \begin{array}{ll}
        l^{(m)}(u) + 1/|N^{\mathcal{F}}_O(u)| & \text{if } u^* \in N^{\mathcal{F}}_O(u), \\
        l^{(m)}(u) & \text{o.w.}
    \end{array} \right.
\end{equation}
Follower and initiator locality can be updated similarly, and we may need to update the overall locality according to Eq.~(\ref{eq:sepLoc}) or Eq.~(\ref{eq:comLoc}). The iteration terminates when the seed set $S$ contains a desired number of users in $\mathbb{A}$, denoted by $\tau_A$. Then the sought target users correspond to all the users in $\mathbb{A}$. The complete process is summarized in Alg.~\ref{alg:step2}, which is implemented using a max-priority queue \cite{CLRS}.

\begin{algorithm}
 \LinesNumbered
 \SetKwInOut{Input}{input}\SetKwInOut{Output}{output}
 \Input{$S, C, \tau_\mathbb{A}$}
 \Output{$U$, i.e., the users in $\mathbb{A}$ }
 $U \gets S$\;
 Compute $l(u),\forall u\in C$, according to Eq. (\ref{eq:interLoc}), (\ref{eq:sepLoc}) or (\ref{eq:comLoc})\;
 $Q\gets \emptyset$\; 
 \For{$u \in C$}{
    $\mathsf{INSERT}(Q, u)$\;
    }
 \While{$|U| < \tau_\mathbb{A}$}{
    $u^* \gets \mathsf{EXTRAC}$\textendash$\mathsf{MAX}(Q)$\;
    $U \leftarrow U + \{u^*\},S \leftarrow S + \{u^*\}$\;
    \For{$u \in N^{\mathcal{F}}_I(u^*)$}{
       $\mathsf{INCREASE}$\textendash$\mathsf{KEY}(Q,u,l(u) + 1/|N^{\mathcal{F}}_O(u)|)$\;
    }
 }
 \Return{$U$.}
  \caption{Identify target users in $\mathbb{A}$ from $C$.} \label{alg:step2}
\end{algorithm}

The termination threshold $\tau_\mathbb{A}$ can be chosen in two ways. First, we can set $\tau_\mathbb{A}$ as the estimated number of Twitter users in $\mathbb{A}$, e.g., about 15.1\% of the population in $\mathbb{A}$ if $\mathbb{A}$ is in U.S. \cite{emarketer14}. Second, $\tau_A$ can be chosen according to the level of confidence we desire. In particular, our algorithm essentially ranks all the candidate users according to our confidence about their locations in $\mathbb{A}$. The later a candidate user is added to $U$, the lower confidence we have that he is indeed in $\mathbb{A}$. Therefore, if we want to obtain a set of target users in $\mathbb{A}$ with high confidence, a small $\tau_\mathbb{A}$ should be used; if we want to cover more users in $\mathbb{A}$, a larger $\tau_\mathbb{A}$ is suitable.

We now analyze the complexity of Alg.~\ref{alg:step2}. In Lines 4-5, we build a max-priority queue $Q$ based on each candidate's locality value, of which the complexity is $\mathcal{O}(|C|\log |C|)$. The loop beginning from Line 6 is used to find the target user one at a time. In each iteration, we extract the maximum value from the priority queue $Q$ in Line 7, set it as a new seed in Line 8, and update the locality value of all its followers in Lines 9-10. The complexity of Line 6-10 is $\mathcal{O}(\tau_\mathbb{A}d\log(|C|))$, where $d$ is the average degree in $\mathbb{A}$. Hence the overall complexity of Alg.~\ref{alg:step2} is $\mathcal{O}((|C|+\tau_\mathbb{A} d)\log(|C|))$.


One may wonder why we do not add more candidates to $C$ once a candidate is added as a new seed to $S$. We have shown in Section~\ref{sec:CoverageC} that the candidate users discovered through the initial seed set $S$ cover the majority of users in $\mathbb{A}$ with overwhelming probability. It is thus unlikely that we can identify more candidate users from newly identified seeds, which has been validated by our simulations in Section~\ref{sec:coverage}. We thus choose not to add more candidates in each iteration.


\subsection{Cost Analysis}\label{sec:CAR}
We now analyze the cost of LocInfer, which consists of the crawling cost and computation cost, and briefly compare it with the existing methods.

We first analyze the crawling cost of LocInfer, which is important given the tight rate limitations Twitter enforces on data crawling. First, Step~1 in LocInfer involves invoking the Twitter geo-search API continuously to obtain the initial seed set $S$ and needs to crawl some geo-tagged users' tweets. Second, Step~2 requires crawling the followees and followers of each seed user in $S$. Finally, Step~3 needs to crawl the followees, followers, and initiators of each candidate user in $C$. Recall that $d$ denotes the average number of followers and followees each seed user has. Our datasets in Section~\ref{sec:Validation} show that $d$ is approximately 600. It has also been reported that 15.1\% U.S. people use Twitter \cite{emarketer14} and that 15.9\% of Twitter users report city-level locations and become seeds in LocInfer. In LocInfer, a user is chosen as a candidate if he has $t$ followers and $t$ followees in $S$. So we can expect that the candidate set size $|C|$ is much smaller than $d|S|$, i.e., 14.4 times the population in the target area $\mathbb{A}$. In contrast, all previous (potential) solutions \cite{HechtTwe11, ChengYou10, MahmuWhe12, LiTow12, MahmuHom14, BacksFin10, McGeeLoc13, JurgeTha13, YamagLan13, ComptGeo14} involve crawling all the Twitter users. Thus LocInfer has a much smaller crawling cost, which makes it practical.

The computation cost of LocInfer is dominated by the third step with the complexity of Alg.~\ref{alg:step2} being $\mathcal{O}((|C| + \tau_\mathbb{A} d) \log(|C|))$, where $d$ is the average neighbors of each user and $\tau_\mathbb{A}$ is the number of target users in $\mathbb{A}$.

\subsection{Countermeasure}\label{sec:Counter}

LocInfer aims to discover the majority of users in any target area even if many of them do not disclose their locations explicitly in their personal profiles. We propose a simple countermeasure here to alleviate the possible concerns of some sensitive users about their location privacy. Since LocInfer discovers a user's location based on his tight connections with other users in the same area, the user can effectively hide his home location by following, retweeting, mentioning, and replying Twitter users outside his home area on a regular basis. This strategy is meaningful because people can follow or interact with others who are in different areas but share the same interests. For example, a user in New York City and the other in Los Angeles may interact in Twitter because they were university classmates in Dallas or knew each other in a concert. The efficacy of this countermeasure is evaluated in Section~\ref{sec:cm}.

\section{Performance Evaluation}\label{sec:Evaluate}

In this section, we thoroughly evaluate LocInfer. As stated before, this paper targets a different problem with existing work \cite{HechtTwe11, ChengYou10, MahmuWhe12, LiTow12, MahmuHom14, BacksFin10, McGeeLoc13, JurgeTha13, YamagLan13, ComptGeo14}, and hence we will not compare LocInfer with them head to head but could incorporate with them in our future work. 

\subsection{Methodology} \label{sec:methodology}
To evaluate LocInfer, we first need build a testing multigraph $G=\langle V, E\rangle$ formed by both users known to be and not be in a target area $\mathbb{A}$, where one challenge is that we cannot directly determine all the Twitter users in $\mathbb{A}$.

To tackle this challenge, we adopt the method used by existing work \cite{HechtTwe11, ChengYou10, MahmuWhe12, LiTow12, MahmuHom14, BacksFin10, McGeeLoc13, JurgeTha13, YamagLan13, ComptGeo14}. Specifically, since the self-reported locations have been found reliable \cite{ComptGeo14}, for each area $\mathbb{A}$ in Table~\ref{tlb:datasets}, we treat all the seed users in $S$ discovered in the first step as the positive ground truth (i.e., they are indeed in $\mathbb{A}$) and randomly partition $S$  into a seed subset $\overline{S}$ of size $\alpha|S|$ and a testing subset $T$ of size $(1-\alpha)|S|$.

For the negative ground truth, we check the followers and followees of $S$ and record the set of users who have specified a location outside $\mathbb{A}$ and randomly choose $\beta$ fraction of these users, where $\beta$ is set as the ratio of seed users over the estimated number of Twitter users in $\mathbb{A}$, as shown in the fourth column of Table~\ref{tlb:datasets}. We denote by  $\Theta$ the resulting user set and let $V=S\cup \Theta$. We finally compute edges among all the users in $V$ according to their followings and interactions by analyzing their followers, followees, and the latest 600 tweets.

We then apply LocInfer to the testing multigraph $G$. Specifically, we first use $\overline{S}$ as the seed set and apply Alg.~\ref{alg:step1} to generate the candidate set $C$. We then apply Alg.~\ref{alg:step2} to  $C$ to generate the target user set $U$ by choosing a $\tau_{\mathbb{A}}$. Following the definitions in Section~\ref{sec:DLA}, the \emph{coverage} can be computed as $|U\cap S|/|S|$, and the \emph{accuracy} can be computed as $|U \cap S|/|U|$ ($|U| = \tau_{\mathbb{A}}$).

Unless stated otherwise, we choose $t=2$ when building the candidate set $C$ with Alg.~\ref{alg:step1} for \texttt{LA} and $t=1$ for all other three datasets, and set $\alpha=0.159$, the average ratio for the four datasets in Table~\ref{tlb:datasets}. The testing multigraphs are summarized in Table~\ref{tlb:evaluation}.

\begin{table}[!th]
    \centering
    \caption{The testing multigraphs for the evaluation. ($\alpha = 0.159$) }
    \begin{tabular}{|c|c|c|c|c|c|}
       \hline
        $\mathbb{A}$ & $|S|$ & $|\overline{S}|$& $|T|$ & $\beta$ & $|\Theta|$\\
       \hline
        \texttt{TS} &  28,161 & 4,478  & 23,683 & 18.65\% & 162,446 \\
        \texttt{PI} &  144,033 & 22,901 & 121,132 & 15.9\% &630,321 \\
        \texttt{CI} &  318,632 & 50,662 & 267,970 & 22.21\% & 1,529,431 \\
        \texttt{LA} &  300,148 & 47,724 & 252,424 & 12.12\% &710,085 \\
        \hline
    \end{tabular}
    \label{tlb:evaluation}
\end{table}

\begin{figure}[b]
\centering
    \includegraphics[width=0.4\textwidth]{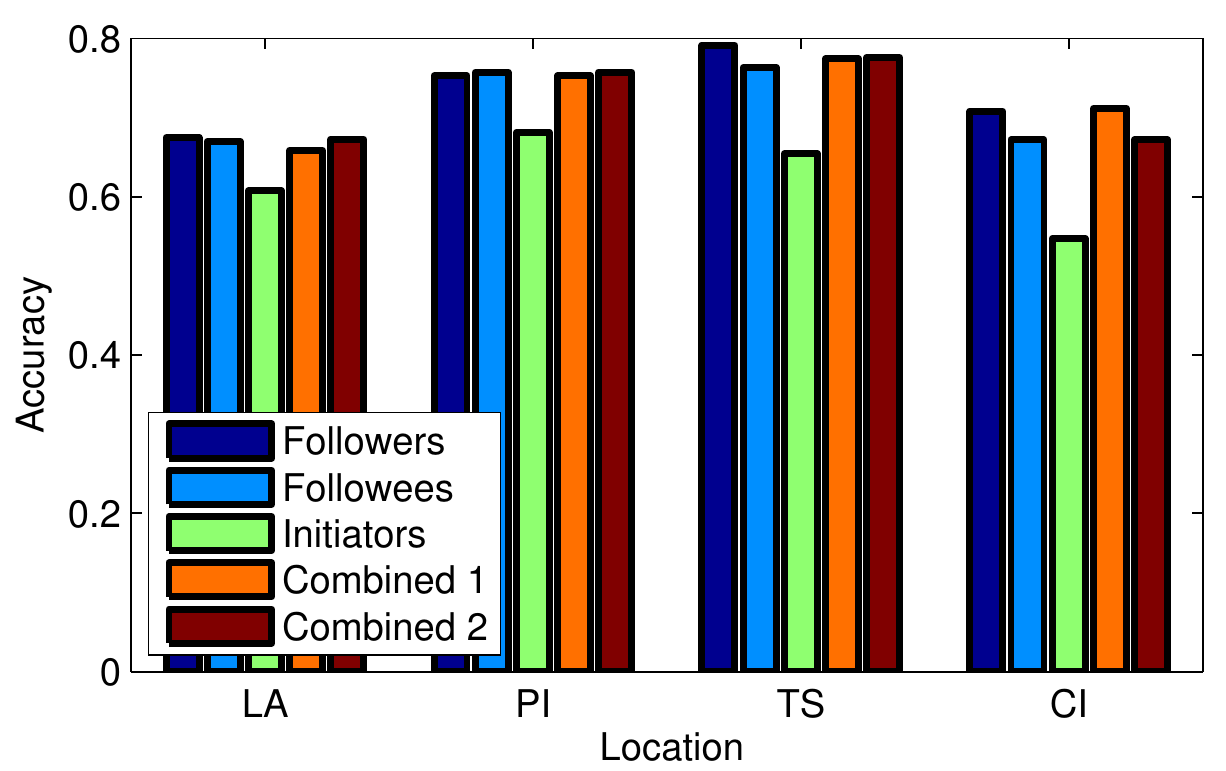}
\caption{The accuracy of LocInfer.}\label{fig:acc}
\vspace{-0.2in}
\end{figure}

\begin{figure*}[t]
\centering
    \subfigure[\texttt{LA}]{\label{fig:acc-la}
        \includegraphics[width=0.23\textwidth]{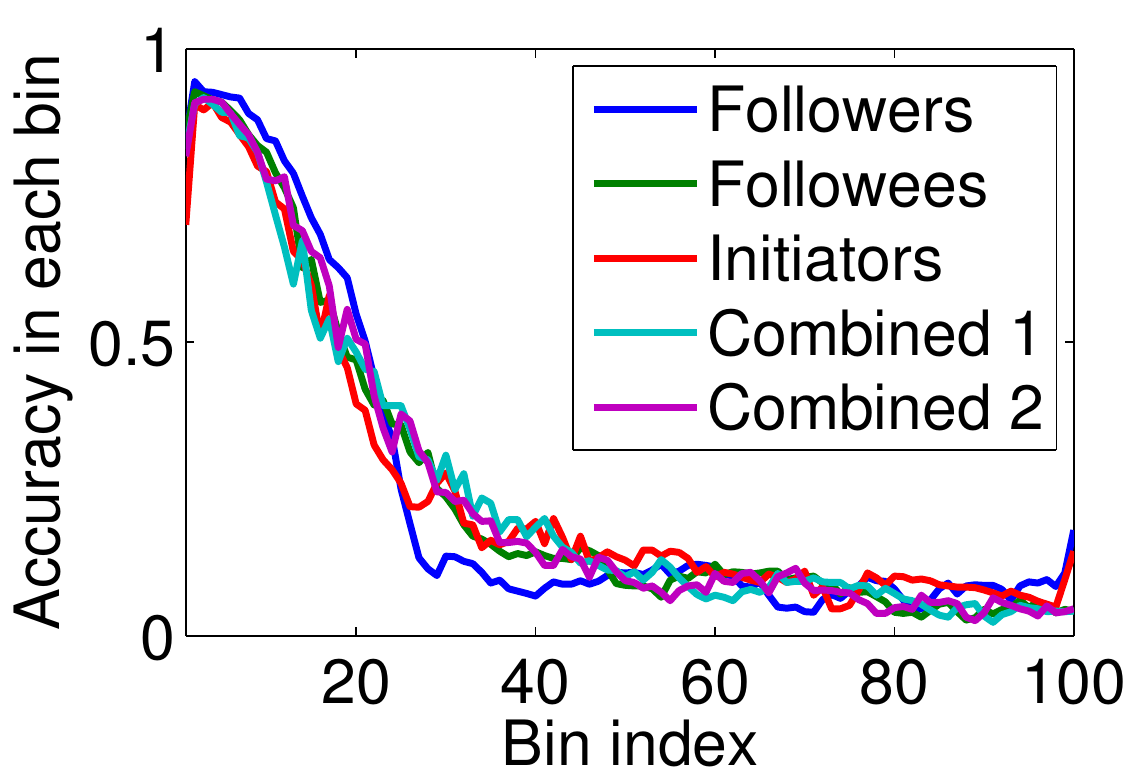}} \hfill
    \subfigure[\texttt{PI}]{\label{fig:acc-pi}
        \includegraphics[width=0.23\textwidth]{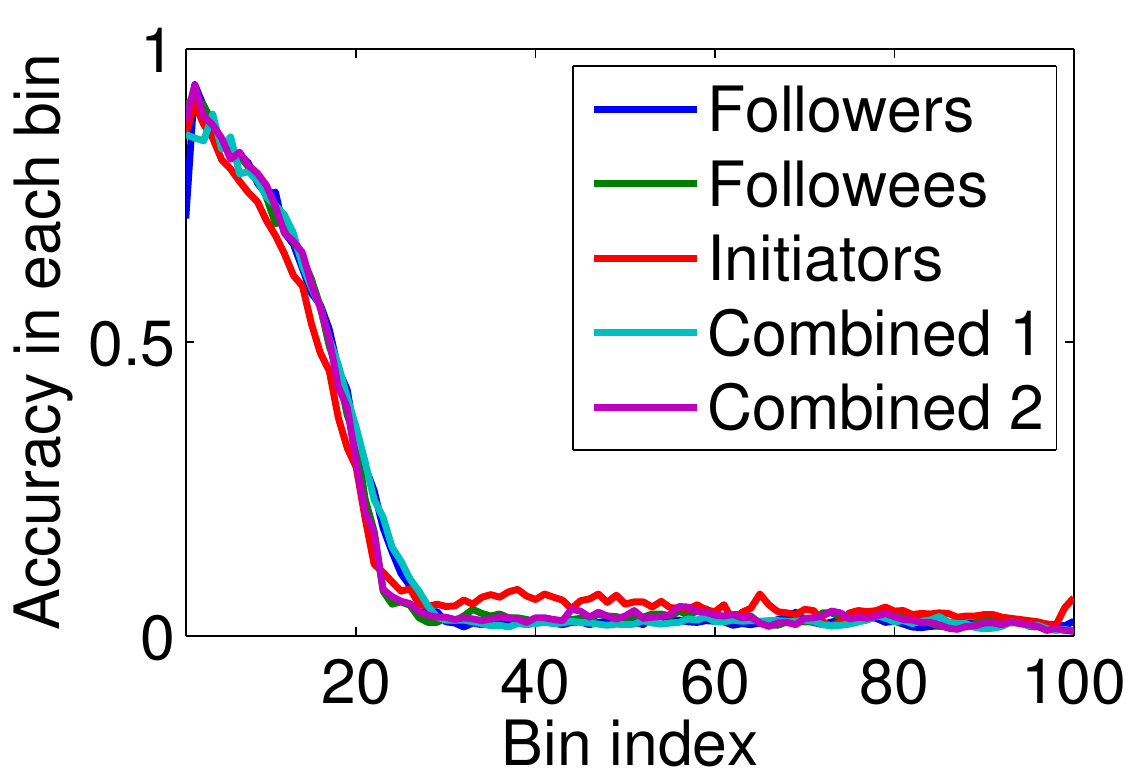}} \hfill
    \subfigure[\texttt{TS}]{\label{fig:acc-ts}
        \includegraphics[width=0.23\textwidth]{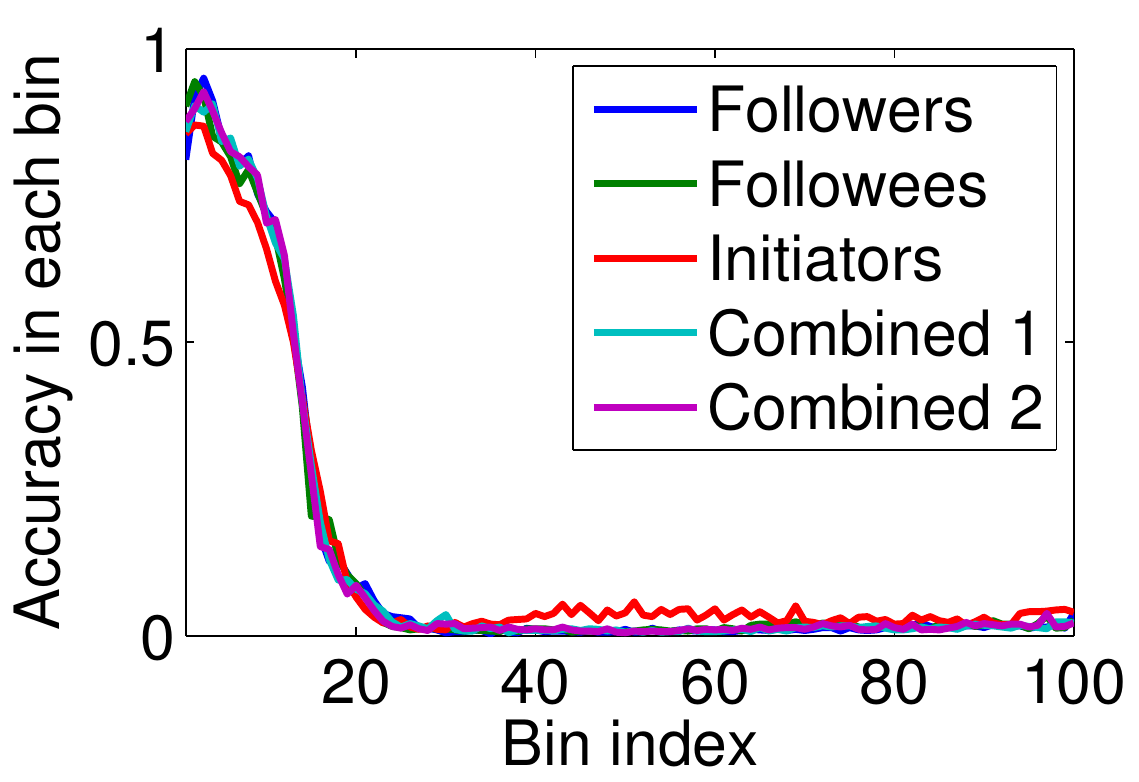}} \hfill
    \subfigure[\texttt{CI}]{\label{fig:acc-la}
        \includegraphics[width=0.23\textwidth]{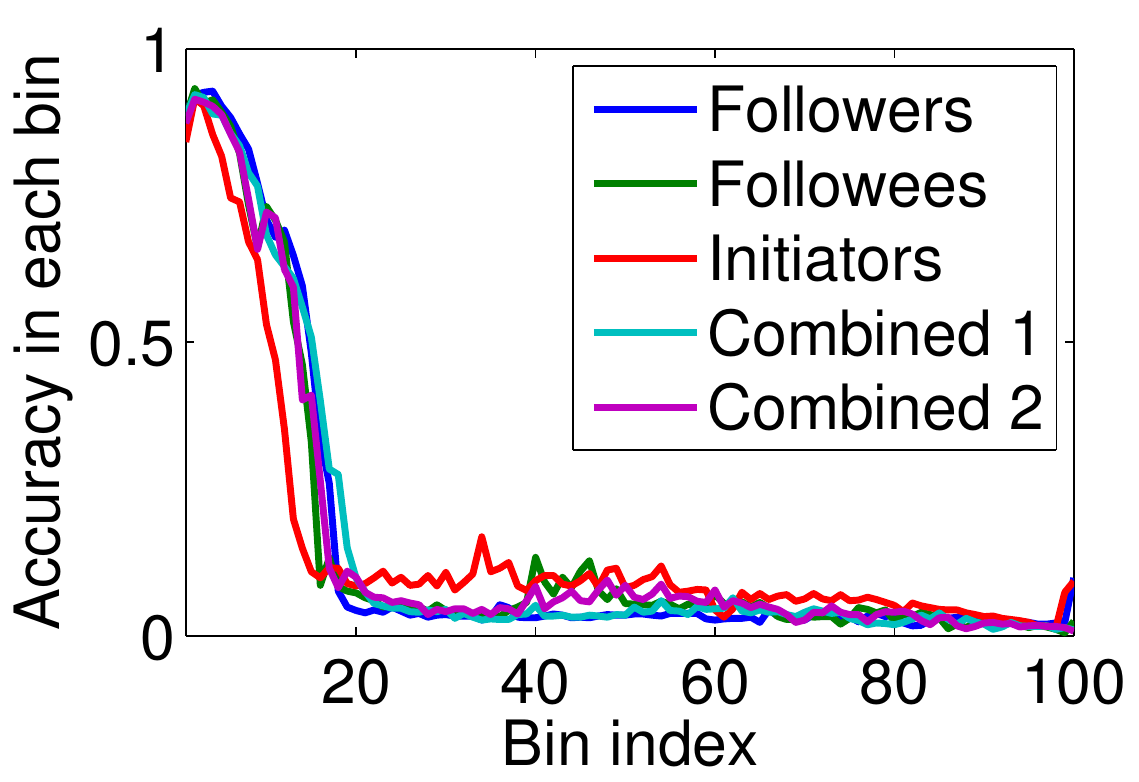}} \hfill
    \caption{Detailed accuracy illustration.}
    \label{fig:accu-bin}
\vspace{-0.1in}
\end{figure*}

\subsection{Accuracy} \label{sec:accuracy}

We first evaluate the accuracy of LocInfer. We compute five locality values for each user, including follower locality, followee locality, initiator locality, and the two locality values defined in Eq.~(\ref{eq:sepLoc}) and Eq.~(\ref{eq:comLoc}), respectively.


Fig.~\ref{fig:acc} shows the accuracy of LocInfer for the four datasets, where $\alpha=|\overline{S}|/|S|=0.159$ and $\tau_\mathbb{A}=|S|$. We can see that the five locality metrics all lead to high accuracy in each area, and initiator locality has the worst performance among them. Specifically, the average accuracy of four datasets for each locality are 73.2\%, 72.6\%, 62.3\%, 72.4\%, 71.9\%, respectively. The reason is that initiator locality depends on interacting edges (corresponding to replies, mentions, and retweets) which are much sparser than following edges in the directed Twitter multigraph as shown in Fig.~\ref{fig:avgd}. Therefore, if many users in $\mathbb{A}$ only follow many people but do not interact with them subsequently, they may be reachable from seed users through following edges but not from interacting edges. We will show the coverage for different locality metrics in the following Section~\ref{sec:coverage}. Moreover, the locality defined in Eq.~(\ref{eq:sepLoc}) and Eq.~(\ref{eq:comLoc}) have nearly the same accuracy with both the follower and followee locality. This is expected because about 96.2\% of the seed set's initiator neighbors are from their followers or followees, as indicated in Section~\ref{sec:Validation}. 

To shed more light on the accuracy of LocInfer, we set $\tau_A=|C|$ so that $U=C\cup \overline{S}$ when Alg. 2 terminates, i.e., every candidate user is eventually added into $S$. Let $U'$ denote the newly discovered users (may not in $\mathbb{A}$), i.e., $U'=C$. We partition $U'$ into 100 bins of equal size $|U'| / 100$ according to the order they are added, where the bins of smaller indexes contain the users discovered earlier. Let $x_i$ denote the number of positive ground-truth users in the $i$-th bin. Fig.~\ref{fig:accu-bin} shows the accuracy of the $i$-th bin, which is defined as the ratio of the number of positive ground-truth users in the $i$-th bin and the number of users in each bin and is computed as $100x_i/|U'|$. We can see that the accuracy in each bin decreases as the bin index increases, which is expected, as the later the users are added to $U'$, the less likely they are indeed located in $\mathbb{A}$.


\begin{figure}[t]
\subfigure[\texttt{TS}]{ \label{fig:impact-alpha-ts}
\includegraphics[width=0.22\textwidth]{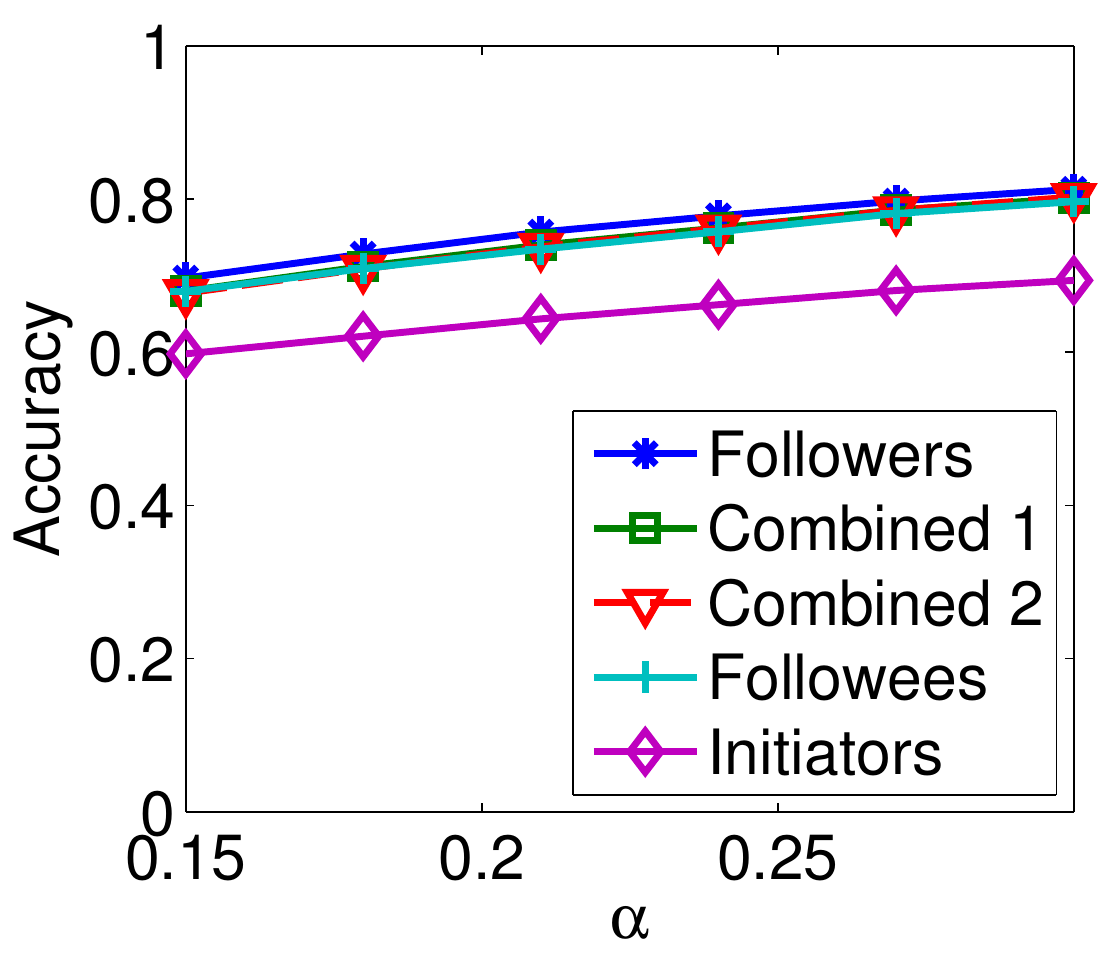}} \hfill
\subfigure[\texttt{PI}]{ \label{fig:impact-alpha-pi}
\includegraphics[width=0.22\textwidth]{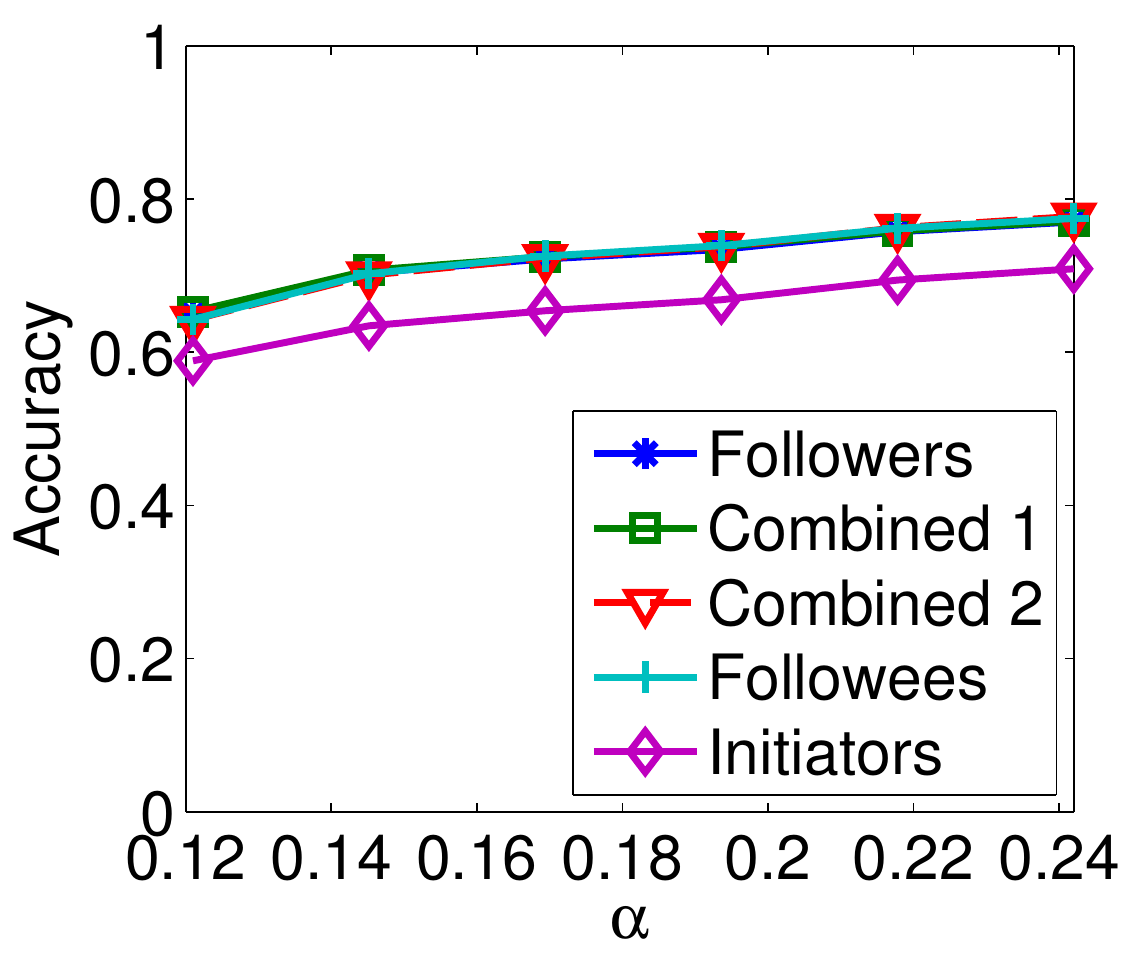}} \hfill
\caption{The impact of $\alpha$.}
\label{fig:impact-alpha}
\vspace{-0.2in}
\end{figure}

Fig.~\ref{fig:impact-alpha} shows the impact of $\alpha = |\overline{S}|/|S|$ on the accuracy of LocInfer. As expected, the accuracy under all locality metrics increases as $\alpha$ increases. The reason is that the larger the $\alpha$, the more seeds, and the easier the target users in $\mathbb{A}$ can be discovered. The downside is that more seeds lead to a larger candidate set and thus higher crawling and computational cost, as Alg.~\ref{alg:step1} needs to check all the neighbors of the seeds.


\begin{figure}[t]
\centering
    \subfigure[For different locations.]{\label{fig:t-loc}
        \includegraphics[width=0.22\textwidth]{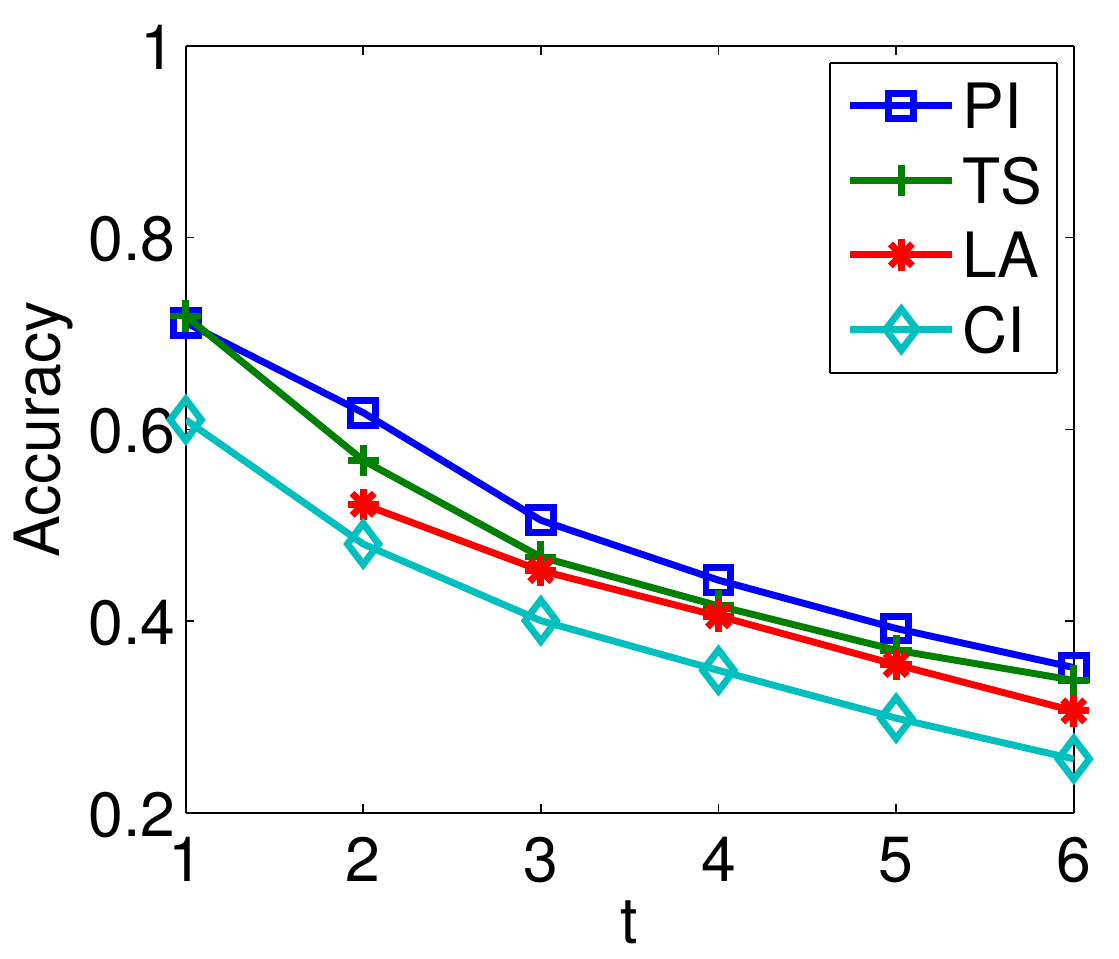}} \hfill
    \subfigure[For different localities.]{\label{fig:t-type}
        \includegraphics[width=0.22\textwidth]{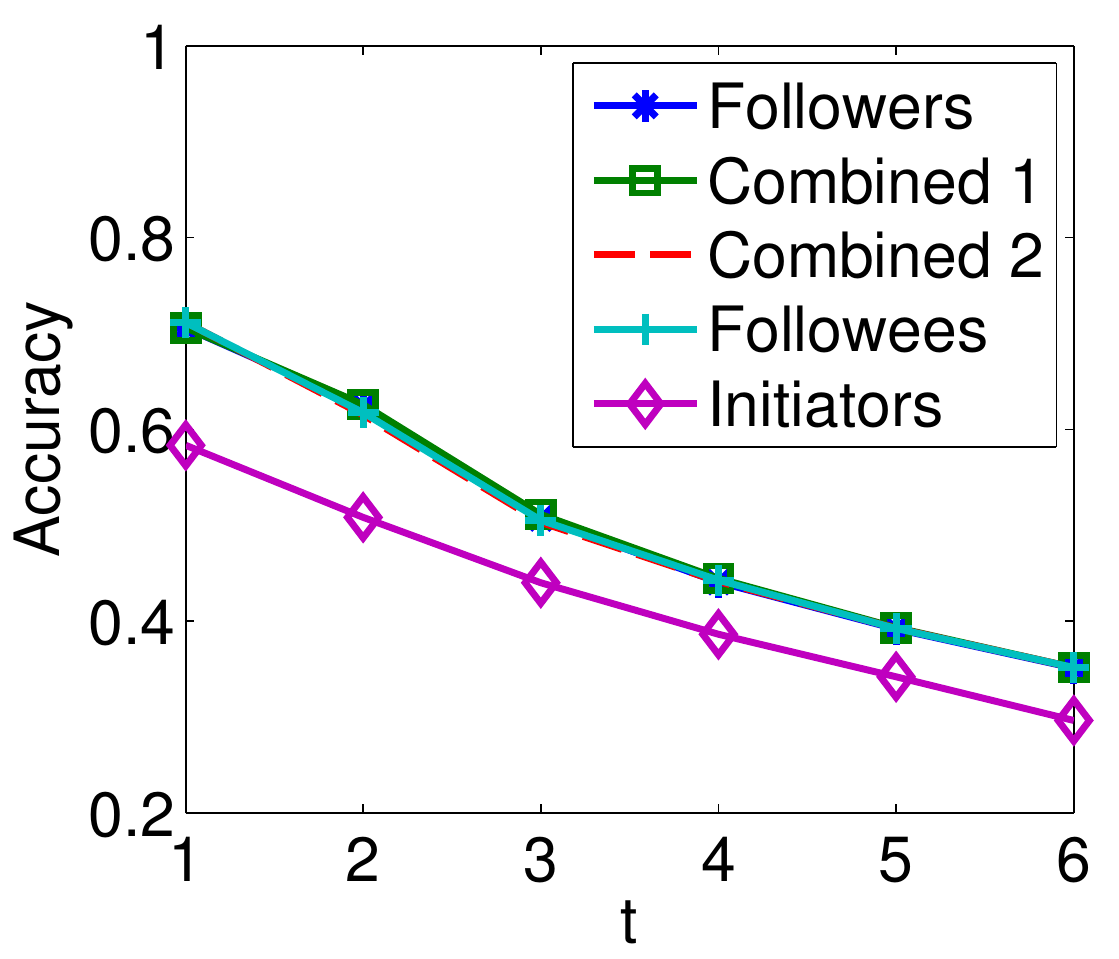}} \hfill
\caption{The impact of $t$.}
\label{fig:impact-t}
\vspace{-0.2in}
\end{figure}

Fig.~\ref{fig:impact-t} shows the impact of $t$ on the accuracy by varying $t$ from one to six. Specifically, Fig.~\ref{fig:t-loc} shows the accuracy for four areas using the followee locality, while Fig.~\ref{fig:t-type} shows the accuracy for different locality metrics by using the \texttt{PI} dataset. Both figures show the accuracy decreases as $t$ increases. This is expected because increasing $t$ will result in the decrease in the size of candidate set and hence miss more users in the target user list who have no chance to appear in the candidate set. However, there is a tradeoff between the accuracy and cost because smaller candidate set will also bring the lower crawling and computational cost.


\begin{figure}[t]
\subfigure[Followees]{\label{fig:tradeoff-friends}
    \includegraphics[width=0.23\textwidth]{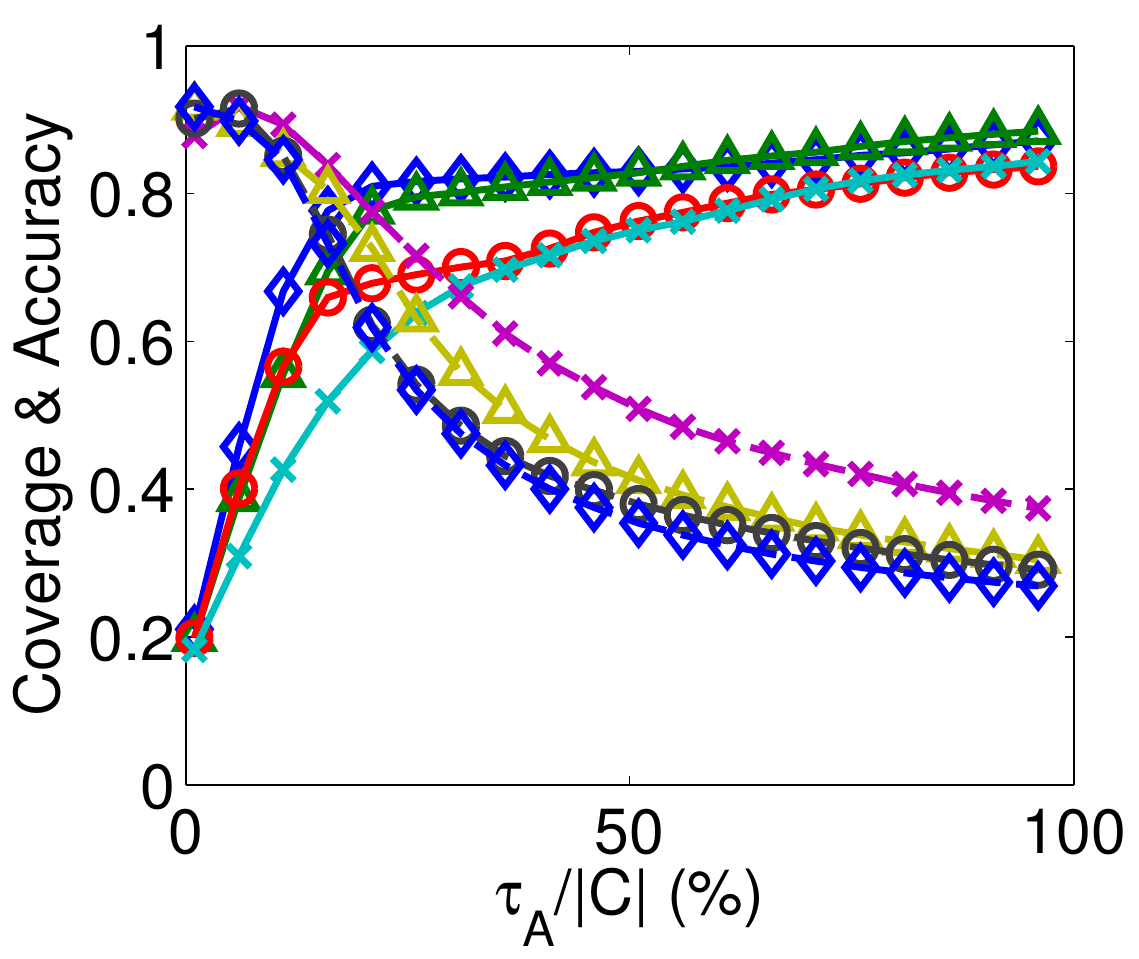}}
\subfigure[Initiators]{\label{fig:tradeoff-initiators}
    \includegraphics[width=0.23\textwidth]{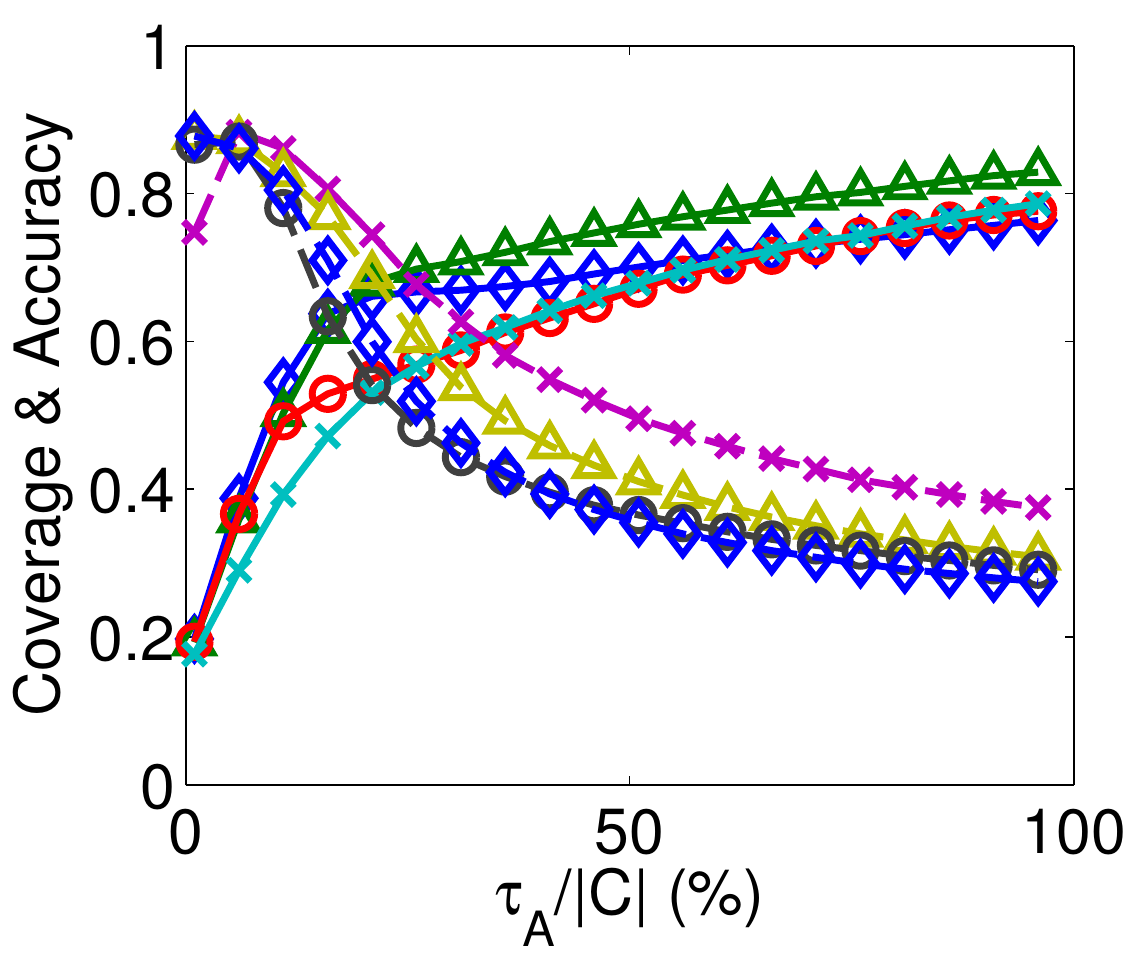}}
\caption{The tradeoff between the coverage and accuracy. The solid and dash curves are the coverage and accuracy; the marks $\diamond, \triangle, \circ, \times$ represent \texttt{TS}, \texttt{PI}, \texttt{CI}, and \texttt{LA}, respectively.}
\label{fig:tradeoff}
\vspace{-0.2in}
\end{figure}

\subsection{Coverage} \label{sec:coverage}

Fig.~\ref{fig:tradeoff} shows the coverage of LocInfer when $\alpha = 0.159$ with the desired number of target users (i.e., $\tau_\mathbb{A}=|U|$), varying from zero to the whole candidate set size $|C|$. We use both the followee and follower locality in this experiment. As expected, the larger $\tau_\mathbb{A}$, the more users in $T$ contained in $U$, the higher coverage, and vice versa. When we set $\tau_\mathbb{A} = |C|$, the average coverage of these four locations by using followee, follower, and initiator locality is equal to 86.3\%, 86.6\%, and 79.7\%, respectively. As stated, since the interacting edges (corresponding to replies, mentions, and retweets) are much sparser than following edges in the directed Twitter multigraph as shown in Fig.~\ref{fig:avgd}, the initiator locality has less coverage than the followee and follower locality. Moreover, the average coverage by using the follower or followee locality in Fig.~\ref{fig:tradeoff} is consistent with Corollary~\ref{co:coverage1}. Specifically, the average number of mutual followers $d_m$ for four datasets is 7.8, 9.0, 11.6, and 11.6, respectively. According to Corollary~\ref{co:coverage1}, when $\alpha = 0.159$, $\overline{r}(t=1) \geq 82.3\%$ which coincides with our results.

\subsection{Accuracy and Coverage Tradeoff}
Fig.~\ref{fig:tradeoff} also shows the anticipated tradeoff between the coverage and accuracy. As we can see, the larger $\tau_{\mathbb{A}}$, the more the positive ground-truth users will be added to $U$, resulting in higher coverage. However, a larger $\tau_{\mathbb{A}}$ will also introduce negative ground-truth users into $U$, resulting in lower accuracy. This tradeoff could guide us to choose the parameter $\tau_{\mathbb{A}}$. On the one hand, if one desires higher coverage, a large termination threshold $\tau_A$ should be used, but it is possible that many users in $U$ may be not indeed in $\mathbb{A}$. On the other hand, if one wants to be certain that the users discovered by LocInfer are most likely in $\mathbb{A}$, a smaller $\tau_A$ should be used at the cost of possibly missing some users indeed in $\mathbb{A}$.


\begin{figure}[b]
\centering
    \subfigure[Followees]{\label{fig:cm-friends}
        \includegraphics[width=0.23\textwidth]{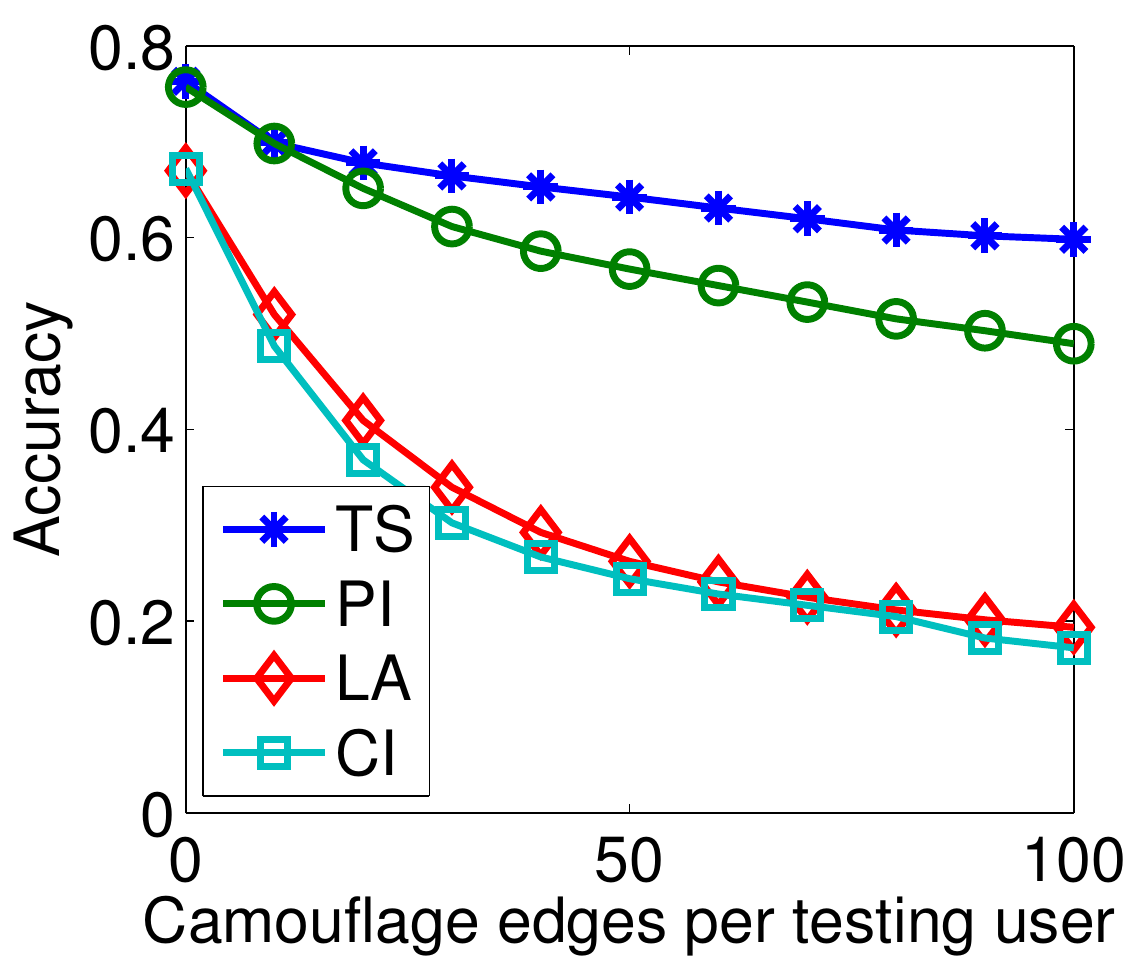}} \hfill
    \subfigure[Followers]{\label{fig:cm-followers}
        \includegraphics[width=0.23\textwidth]{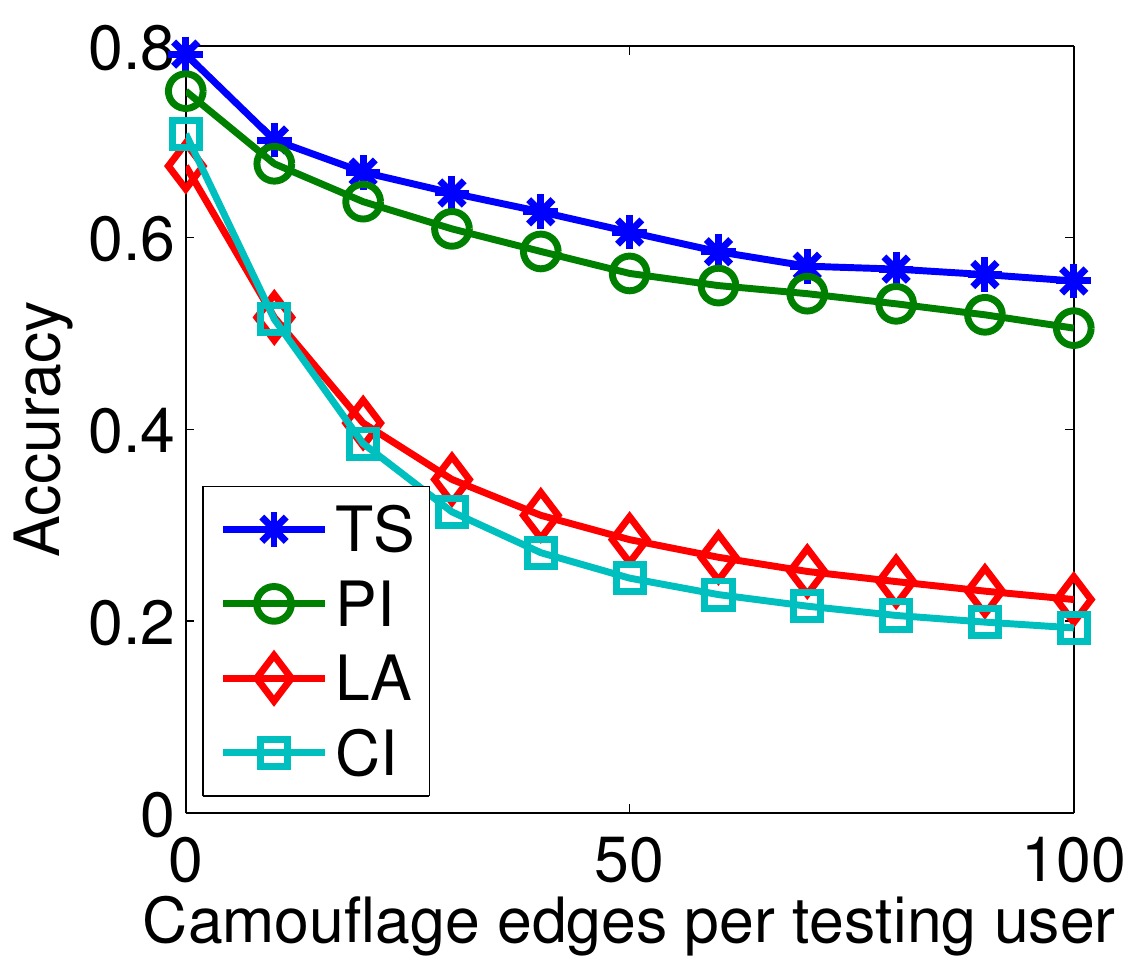}} \hfill
\caption{Countermeasure efficacy. }
\vspace{-0.1in}
\label{fig:cm}
\end{figure}

\subsection{Effectiveness of Countermeasure} \label{sec:cm}

To evaluate the efficacy of this countermeasure, we let each user in the testing set $T$ in each area additionally follow or be followed by a certain number of users from $\Theta$ who are not in $\mathbb{A}$, and we refer to those following edges as \emph{camouflage} edges. Fig.~\ref{fig:cm} shows the accuracy result under this countermeasure. As we can see, the accuracy of LocInfer decreases as the number of camouflage edges increases, highlighting the efficacy of the countermeasure. Besides adding random following edges, a user can also retweet, mention, and reply to random users on a regular basis to counteract LocInfer, which is expected to yield the similar results as these interactions can also decrease the geographic locality.

\section{Related Work}\label{sec:RW}
In this section, we briefly present the existing work mostly related to this paper.

Inferring a Twitter user's hidden location has been widely studied in the community, which can be categorized as cont- \newline ented-based and network-based methods. Content-based methods \cite{HechtTwe11, ChengYou10, MahmuWhe12, MahmuHom14} try to infer the user's location by his tweets. For example, Cheng \emph{et al.} \cite{ChengYou10} proposed a probabilistic framework to estimate a Twitter user's location based on his tweets, resulting in placing 51\% of Twitter users within 100 miles of their home locations. Mahmud \emph{et al.} \cite{MahmuHom14} further improved this result to 64\% for city-level location inference. Hecht \emph{et al.} \cite{HechtTwe11} thoroughly studied the location profiles for the Twitter users and found that 34\% of the users either left them empty or just non-geographic information. They also inferred the user's country and state information by checking their tweets. Network-based methods try to estimate a Twitter user's locations by his neighbors \cite{BacksFin10, McGeeLoc13, JurgeTha13, YamagLan13, ComptGeo14}. Jurgens \cite{JurgeTha13} aimed to infer all the users' location by building a global networks and then propagating location assignments from several seeds. Yamaguchi \emph{et al.} \cite{YamagLan13} built several distributed landmarks and then inferred a user's location based on the connections with them. Compton \emph{et al.} \cite{ComptGeo14} inferred the locations of all the users in Twitter by minimizing their distances with the labelled users. Moreover, Li \emph{et al.} \cite{LiTow12} combined the content and network information to obtain the more accurate estimation. All these schemes seek to address the same question: how can we infer a user's hidden location from all his location-related tweets and/or neighbors' locations? This paper targets a different problem: could we discover all or the majority of Twitter users in a metropolitan area? Directly adopting these existing methods to address our problem will result in scanning the whole Twitter network. Moreover, the accuracy of LocInfer outperforms the state of the art in \cite{MahmuHom14}.


This paper is also related to privacy disclosure and protection in OSNs in general. Li \emph{et al.} \cite{LiAll14} used the neighbors' locations to infer the location in the emerging location-based social networks. Sun \emph{et al.} \cite{SunSec15} protected the location privacy on the social crowdsourcing networks. Mao \emph{et al.} \cite{MaoLoo11} used the tweets to detect the Twitter users' situational leak such as vacation status, drunk status, and medical conditions. Dey \emph{et al.} \cite{DeyEst12} leveraged the information from neighbors to estimate the age of Facebook users. Mislove \emph{et al.} \cite{MisloYou10} also used the local connections around the Facebook users to infer their hidden attributes such as major, college, and political view. Our paper is complementary to these work and also highlights that current OSNs have emerged as an arguable threat to users' privacy.

\section{Conclusion and Future Work}\label{sec:CFW}
This paper presented LocInfer, a novel system that is able to discover the majority of Twitter users in any geographic area. Detailed experiments confirmed the high efficacy and efficiency of LocInfer. We also proposed a countermeasure to hide the locations of sensitive users from LocInfer and evaluated its efficacy with experiments driven by real datasets.

There are some open issues to study in our future work. First, when constructing a reliable seed set $S$ at the first step, we assumed the credibility of the self-reported locations and used the heuristic method to refine the seed set. For the future work, more advanced methods can be used to refine the seed set, and it is also interesting to check the impact of the seed set credibility on the ultimate performance. Second, the accuracy of LocInfer can be further improved by incorporating the existing content-based methods \cite{HechtTwe11, ChengYou10, MahmuWhe12, MahmuHom14} and other signals such as timezone and language. 

\section*{Acknowledgments}
We truly appreciate the anonymous reviewers for their constructive comments. This work is partially supported by ARO through W911NF-15-1-0328.

\bibliographystyle{IEEETran}

\end{document}